\title{Probing a Set of Trajectories to Maximize Captured Information}
\author{S\'andor P.~Fekete}{Department of Computer Science, TU Braunschweig, Germany}{s.fekete@tu-bs.de}{}{}
\author{Alexander Hill}{Department of Computer Science, TU Braunschweig, Germany}{a.hill@tu-bs.de}{}{}
\author{Dominik Krupke}{Department of Computer Science, TU Braunschweig, Germany}{d.krupke@tu-bs.de}{}{}
\author{Tyler Mayer}{Decision Management Systems, Charles River Analytics Inc., Boston, USA.}{tmayer@cra.com}{}{}
\author{Joseph S.~B.~Mitchell}{Department of Applied Mathematics and
Statistics, Stony Brook University, USA.}{joseph.mitchell@stonybrook.edu}{}{}
\author{Ojas Parekh}{Sandia National Laboratories, USA.}{odparek@sandia.gov}{}{}
\author{Cynthia A.~Phillips}{Sandia National Laboratories, USA}{caphill@sandia.gov}{}{} 
\authorrunning{S.P. Fekete, A. Hill, D. Krupke, T. Mayer, J.S.B. Mitchell, O. Parekh, C.A. Phillips}
\keywords{Algorithm engineering, optimization, complexity, approximation, trajectories}
\newcommand{\old}[1]{{}}   
\newcommand{\remove}[1]{{}}   
\newcommand{\cindy}[1]{\textcolor{green!60!black}{(Cindy: #1)}}
\newcommand{\joe}[1]{\textcolor{blue}{(Joe: #1)}}
\newcommand{\sandor}[1]{\textcolor{red}{(Sandor: #1)}}
\newcommand{\ojas}[1]{\textcolor{violet}{(Ojas: #1)}}
\newcommand{\move}[1]{\color{magenta}{#1}\color{black}}
\renewcommand{\move}[1]{}
\newif\ifabstract
\newif\iffull
\begin{document}

\maketitle

\begin{abstract} \large\baselineskip=12pt 
  We study a trajectory analysis problem we call the \textsc{Trajectory
    Capture Problem} (TCP), in which, for a given input set ${\cal T}$
  of trajectories in the plane, and an integer $k\geq 2$, we seek to
  compute a set of $k$ points (``portals'') to maximize the total
  weight of all subtrajectories of ${\cal T}$ between pairs of
  portals.  This problem naturally arises in trajectory analysis and
  summarization.

  \baselineskip=12pt
  We show that the TCP is NP-hard (even in very special cases) and give
  some first approximation results.  Our main focus is on attacking
  the TCP with practical algorithm-engineering approaches, including
  integer linear programming (to solve instances to provable
  optimality) and local search methods. We study the integrality gap
  arising from such approaches. We analyze our methods on
  different classes of data, including benchmark instances that we
  generate.  Our goal is to understand the best performing heuristics,
  based on both solution time and solution quality.  We demonstrate
  that we are able to compute provably optimal solutions for
  real-world instances.
\end{abstract}

\section{Introduction}
In recent years, the progress in technical capabilities has resulted in
massive amounts of trajectory data for cars, trucks, trains, aircraft,
ships, people, and animals being collected at increasing rates. 
This presents major challenges for storing and evaluating this ever-growing
data, as well as for extracting useful information; this motivates
the search for data structures and algorithms that capture some of the most important
and useful aspects of such trajectories. At the same time, the 
availability of large volumes of data makes it possible to consider
useful aspects that were previously unavailable due to the lack of data
or algorithmic evaluation methods, such as collecting useful information
along the traveled trajectories.

One such means of analyzing a set ${\cal T}$ of trajectories is to determine 
``popular pairs'' $(p_1,p_2)$ of locations (or location/time pairs) for which
there is a significant ``value'' of the trajectories ${\cal T}$ going between
those points. This value can arise from aggregated data between checkpoints, such
as the total passenger-distance or the accumulated total pollution along the way;
it also comes up through the use of tomographic methods 
(i.e., determining physical phenomena by measuring aggregated effects along the path between two sensors), 
which are highly important in the context of many other application areas, such as in
astrophysics~\cite{korth2002particle}. A limiting factor is usually the need to pick a set of locations of finite cardinality, 
i.e, placing a limited number of toll booths, cameras for average speed measurement, various other
types of sensors, or abstract collections of focus points for sampling trajectories.
For an example, consider the scenario shown in Figure~\ref{fig:sanfran},
which corresponds to more than 500,000 data points that arise from the trajectories
of over 250 taxi cabs in San Francisco. Our goal is to
identify a small subset of locations that allow us to capture as much
of the movement information, in terms of weighted distance between checkpoints, as possible.

\begin{figure}[h]
\centering
\includegraphics[width = 0.7\textwidth]{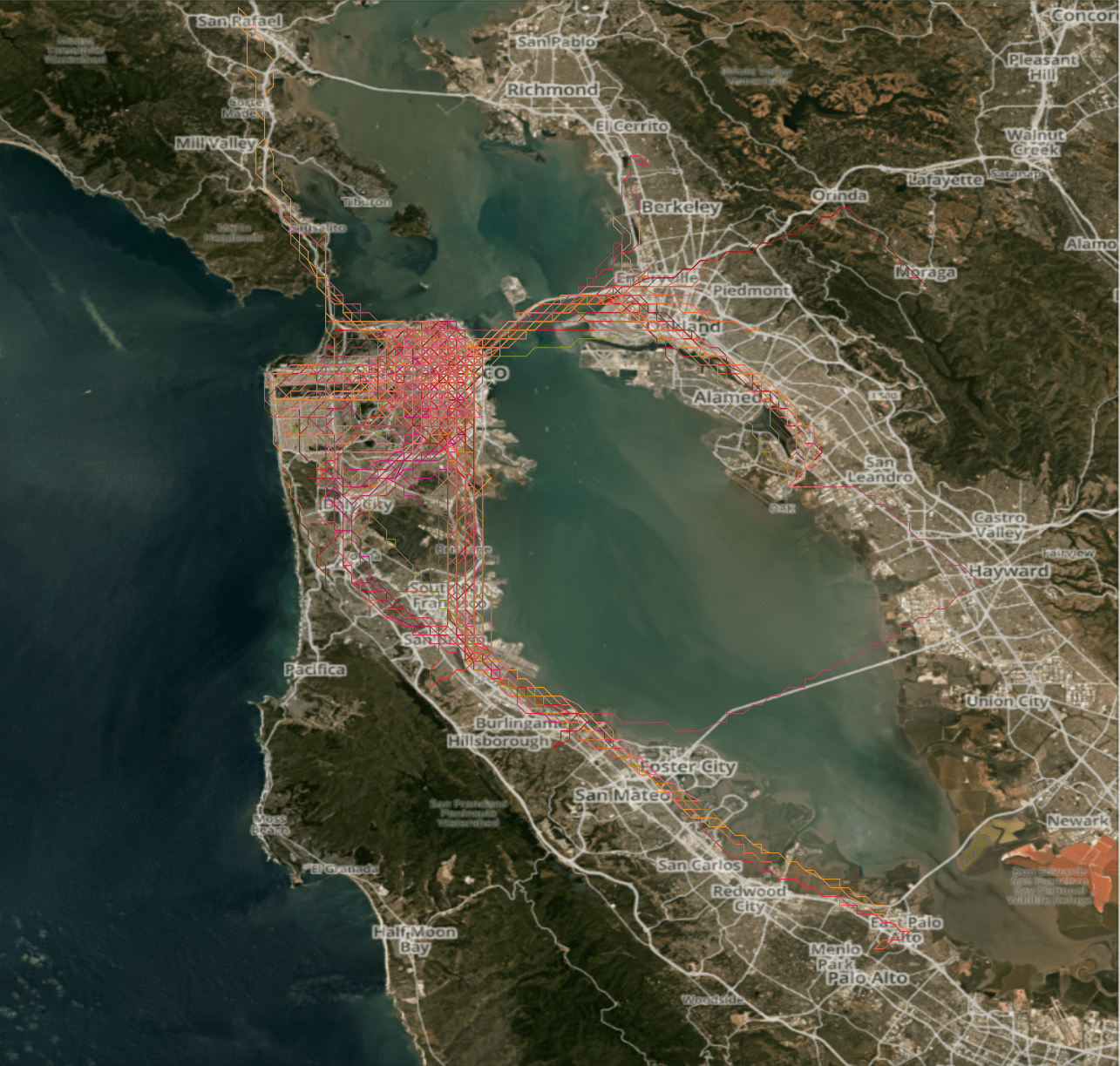}
  \caption{A set of taxi trajectories after preprocessing in San Francisco Bay Area.\\ (Satellite images courtesy of Planet Labs Inc.)}
\label{fig:sanfran}
\end{figure}

In this paper, we study the {\sc Trajectory Capturing Problem} (TCP), a generalization of ``popular pair''
computation: Given a set ${\cal T}$ of trajectories and an integer $k\geq
2$, determine a set of $k$ {\em portals} (points) to maximize the sum
of the weights of the inter-portal subtrajectories in ${\cal T}$; such
subtrajectories are said to be ``captured'' by the set of portals.
After establishing that the TCP is NP-hard, and giving some first
approximation bounds, we focus on algorithm
engineering methods for attacking the problem practically.

\subsection{Our Results}

We provide results both for algorithmic theory and algorithm
engineering aspects of the \textsc{Trajectory Capturing Problem}.

\begin{itemize}
\item We prove NP-hardness for the TCP, even 
for instances with trajectories consisting of individual axis-parallel segments.
\item We establish two approximation algorithms. One has approximation factor $K$ if the input trajectory set decomposes into $K$ subsets where within each subset no two segments cross, though they can overlap, ($K$ noncrossing subsets). The other has an approximation factor $\Delta$, the maximum number of input trajectories hit by any single point.
\item We develop an Integer Linear Program (IP) to solve TCP instances
  to provable optimality.
\item We show that, in general, the IP formulation has unbounded
  integrality gap\footnote{The integrality gap is $\max_{\cal I} \frac{\mbox{LP}({\cal I})}{\mbox{IP}({\cal I})}$, where ${\cal I}$ is a TCP instance, IP$({\cal I})$ is the optimal IP solution value and LP$({\cal I})$ is the optimal solution value to the linear programming (LP) relaxation.}. 
  For inputs
  decomposing into 2 noncrossing subsets (e.g., arising from
  axis-parallel segments), we show that the integrality gap is at most
  $\frac{k}{\lfloor k/2\rfloor}$
  (Theorem~\ref{thm:integrality-gap}).
\item We develop methods for generating challenging 
benchmark instances for experimentation. For geometric instances, based on segment arrangements, this requires care to address geometric robustness and accuracy.
\item We compute provably optimal solutions for general instances 
up to thousands of candidate capture points.
\item We give provably optimal solutions for even larger instances, with up to 7000
possible capture points, for instances based on axis-parallel segments, where we find the integrality gap to be quite small.
\item Using the IP solutions as a reference, we perform 
a thorough computational study using heuristic algorithms (a greedy algorithm, iterated local search, simulated annealing, and an evolutionary algorithm), with various settings, to understand how heuristics perform on various instances, 
in terms of time and solution quality.
\item We demonstrate our methods on 
real-world instances, including a provably optimal solution 
for taxi-cab data on 250 trajectories, with more than 
500,000 individual geographic data points.
\end{itemize}

Given the broad range of potential applications, scenarios and assumptions, 
we do not claim (or even aim) to provide a final set of methods for the 
general problem. Instead, we focus on demonstrating that a number of
modeling and optimization approaches can provide a promising range of insights and
tools for future work from various directions.

\subsection{Related Work}
Related to our problem is the well-studied \textsc{Geometric Hitting Set Problem} (GHS), in which
one seeks a smallest cardinality set of points to hit a given set of
lines, segments, or trajectories in the plane; as a single point
suffices to capture all of a trajectory it lies on, achieving large objective values for the 
GHS is easier than for the TCP.
The GHS is known to be NP-hard, hard to approximate (below a threshold), and
some natural geometric cases have constant-factor approximation algorithms;
see, e.g., \cite{fekete2018geometric}, and the references therein. 

There is a vast literature on problems of analyzing, clustering, mining, and
summarizing a set of trajectories.
For an extensive survey of trajectory data mining methods, see Zheng~\cite{zheng2015trajectory} and Zheng and Zhou~\cite{zheng2011computing}.
Notions of ``flocks'' and ``meetings'' have been formalized and studied
algorithmically~\cite{benkert2008reporting,gudmundsson2006computing,laube2008decentralized,vieira2009line}.
Gudmundsson, van Kreveld, and Speckmann \cite{gudmundsson2007efficient}
define \emph{leadership}, \emph{convergence}, and \emph{encounter} and provide exact
and approximate algorithms to compute each.  Andersson, Gudmundsson,
Laube, and Wolle \cite{andersson2008reporting} show that several {\em
  Leader-Problem} (LP) variants ({\em LP-Report-All, LP-max-Length,
  LP-Max-Size}) are all polynomially solvable and provide exact
algorithms.  Buchin et al. \cite{buchin2013trajectory} present a
framework to fully categorize trajectory grouping structures
(grouping, merging, splitting, and ending of groups).  
Other approaches to trajectory summarization naturally include
cluster analysis, of which there is a large body of related work.
Li, Han, and Yang \cite{li2004clustering}
consider rectilinear trajectories and show how to cluster with
bounding rectangles of a given size.  Several approaches (e.g.,
\cite{hadjieleftheriou2003line, lee2008traclass, lee2007trajectory, 
nanni2006time}) consider density-based methods for
clustering sub-trajectories. Lee, Han, Li, and Gonzalez \cite{lee2008traclass} take it one
step further by considering a two-level clustering hierarchy that first
accounts for regional density and then considers lower-level 
movement patterns.  Li, Ding, Han, and Kays \cite{li2010swarm} consider a problem
(related to \cite{gudmundsson2007efficient}) in which they seek to identify all
{\em swarms} or groups of entities moving within an arbitrary shaped
cluster for a certain, possibly disconnected, duration of time.
Also, Uddin, Ravishankar, and Tsotras \cite{uddin2011finding} consider finding what
they call {\em regions of interest} in a trajectory database.

In motivating tomographic applications, the number of checkpoints is an important constraint in the use of discrete tomography, e.g., in astrophysics (Korth et al.~\cite{korth2002particle}).

\section{Preliminaries}
\label{sec:preliminaries}

We are given a set of trajectories ${\cal T}$, each specified by a
sequence of
points, e.g., in the Euclidean plane.
We seek a set $P=\{p_1,\ldots,p_k\}$ of $k$ {\em portals}, i.e., selected points 
that lie on some of the trajectories.
While our practical study focuses on instances in which the trajectories
${\cal T}$ are purely spatial, e.g., given as polygonal chains or line segments in the plane,
our methods apply equally well to more general portals and to trajectories that include a temporal component and live in space-time.
More generally, we are given a graph ${\cal G}$, with length-weighted edges, and a set of paths within ${\cal
 G}$. We wish to determine a subset of $k$ of the nodes of ${\cal G}$ that
maximizes the sum of the (weighted) lengths of the
subpaths (of the input paths) that link consecutive portals along the
input paths. 

We seek to compute a $P$ that maximizes the total {\em captured
  weight} of subtrajectories between pairs of portals.  
For a trajectory $\tau\in {\cal T}$, if there are two or more portals
of $P$ that lie along $\tau$, say $\{p_{i_1},\ldots,p_{i_q}\}$ (for
$q\geq 2$), then the subtrajectory, $\tau_{p_{i_1},p_{i_q}}$, between
$p_{i_1}$ and $p_{i_q}$ is {\em captured} by $P$, and we get credit
for its weight $f(\tau_{p_{i_1},p_{i_q}})$.
(For many of our instances, $f(\tau_{p_{i_1},p_{i_q}})$ corresponds
to the Euclidean distances, denoted by $|\tau_{p_{i_1},p_{i_q}}|$,
but our methods generalize to other types of weights.)
Let $f_P(\tau)$ denote the captured weight of trajectory $\tau$ by the
portal set $P$. The \textsc{Trajectory Capture Problem} (TCP) is then to
compute, for given ${\cal T}$ and $k$, a set of $k$ portals
$P=\{p_1,\ldots,p_k\}$ to maximize $\sum_{\tau\in {\cal T}}
f_P(\tau)$.

\section{Analytical Results}\label{sec:theory}

The TCP is NP-hard and hard to approximate for general graphs even when all trajectories have weight $1$.
Given a graph, let each edge be a trajectory.
Then an optimal solution to TCP gives the densest $k$-node subgraph.
Manurangsi~\cite{manurangsi2017almost} showed that, assuming the exponential time hypothesis, there cannot be an $n^{\frac{1}{{\log \log n}^c}}$-approximation algorithm for the \textsc{Densest k-Subgraph} problem for any constant $c>0$.

In the following, we give more specific results for a range of geometric TCP versions.

\subsection{One-Dimensional TCP}

In the one-dimensional setting, the underlying graph ${\cal G}$ is a path, and the input trajectories 
${\cal T}=\{ (a_1,b_1),\ldots,(a_n,b_n)\}$ are a set of
subpaths of ${\cal G}$, specified by pairs of integers, $a_i,b_i$.  
A solution to the TCP then consists of $k$ points, $P=\{p_1,\ldots,p_k\}$,
w.l.o.g.\ indexed in sorted order, $p_1<p_2<\cdots <p_k$.  

\begin{theorem}
  The one-dimensional TCP can be solved exactly in polynomial time.
  \end{theorem}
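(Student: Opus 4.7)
The plan is to exploit the one-dimensional structure by decomposing the captured weight into contributions from consecutive portal gaps, and then setting up a dynamic program over a discrete set of candidate portal positions.

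The first step is the decomposition. For any sorted portal set $p_1 < p_2 < \cdots < p_k$, the weight captured from an interval $(a_i, b_i)$ is the distance from the leftmost to the rightmost portal inside $[a_i,b_i]$, which telescopes as a sum over consecutive gaps lying entirely inside the trajectory. Summing over $i$ and swapping the order of summation gives
\begin{equation*}
\sum_{i=1}^{n} f_P(\tau_i) \;=\; \sum_{j=1}^{k-1} (p_{j+1}-p_j)\cdot N(p_j,p_{j+1}),
\end{equation*}
where $N(x,y) := |\{i : a_i \le x \text{ and } y \le b_i\}|$ counts trajectories containing the whole interval $[x,y]$. This rewriting is the key structural insight: the objective is additive over consecutive portal pairs, which is exactly what a left-to-right DP needs.

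The second step is to discretize. I would argue that there is an optimal solution whose portals lie in the set $E$ of trajectory endpoints $\{a_i, b_i : 1 \le i \le n\}$. Fix the positions of all portals except $p_j$; as $p_j$ varies in the interval between its neighbors $p_{j-1}$ and $p_{j+1}$, the two counts $N(p_{j-1},p_j)$ and $N(p_j,p_{j+1})$ are piecewise constant with breakpoints only at points of $E$, so the contribution of the two affected gaps is piecewise linear in $p_j$. Its maximum is therefore attained at a breakpoint (or at an endpoint of the feasible interval, which is itself in $E$). Iteratively snapping each portal to an endpoint without decreasing the objective proves the claim.

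The third step is a standard DP. Let $x_1 < x_2 < \cdots < x_m$ with $m \le 2n$ be the sorted candidate positions, and precompute $N(x_s,x_t)$ for all pairs in $O(m^2 n)$ time (or faster, but polynomial suffices). Define
\begin{equation*}
D(j,t) \;=\; \max_{1 \le s < t}\bigl[\,D(j-1,s) + (x_t - x_s)\cdot N(x_s,x_t)\,\bigr],
\end{equation*}
with $D(1,t)=0$, and return $\max_t D(k,t)$. This has $O(km)$ states and $O(m)$ work per transition, giving total time polynomial in $n$ and $k$.

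The main (minor) obstacle is making the discretization argument rigorous in edge cases: when the optimal $p_j$ coincides with a portal-defining event on the boundary of a trajectory, we must check that the inequalities defining $N(\cdot,\cdot)$ (strict versus non-strict, e.g., whether $a_i = p_j$ counts as inside) are handled consistently so that snapping preserves or improves the objective; this is a standard tie-breaking argument and does not affect polynomiality.
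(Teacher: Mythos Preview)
Your proof is correct and follows essentially the same approach as the paper: both restrict portals to the $2n$ trajectory endpoints and run a left-to-right dynamic program whose transition adds, for each consecutive portal pair $(x,x')$, the gap length $x'-x$ times the number of input intervals containing $[x,x']$ (your $N(x,x')$ is exactly the paper's inner summation). The only difference is cosmetic---you index the DP by the number of portals already placed while the paper indexes by the number remaining---and you make explicit the piecewise-linearity justification for discretizing to endpoints, which the paper leaves implicit.
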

\begin{proof}
  For $i=1,2,\ldots,k-1$, let $V_i(x)$ be the maximum possible length of ${\cal T}$ captured by points $(p_i,\ldots,p_k)$, with $p_i=x\in \{a_1,\ldots,a_n,b_1,\ldots,b_n\}$; let $V_k(x)=0$, for any $x$.
  Then, the value functions $V_i$ satisfy the following dynamic programming recursion, for $i=1,2,\ldots,k-1$, and each
  $x\in \{a_1,\ldots,a_n,b_1,\ldots,b_n\}$:
  $$V_i(x) = \max_{x'\in \{a_1,\ldots,a_n,b_1,\ldots,b_n\}, x'>x} \{ V_{i+1}(x')+ \sum_{j: (x,x')\subseteq (a_j,b_j)} (x'-x) \}.$$
  The summation counts the length $(x'-x)$ once for each input interval that contains the interval $(x,x')$.
  We can compute the $O(nk)$ values $V_i(x)$ in time $O(n^2k)$ by incrementally updating the summation as we consider values of $x'$ in increasing order.
\end{proof}

\subsection{Two-Dimensional TCP}

\subsubsection{Complexity}

\begin{theorem}
  \label{thm:general-hard}
The TCP is NP-hard, even for an input ${\cal T}$ of $n$ line segments in the plane.
\end{theorem}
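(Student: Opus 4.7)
The plan is to reduce from the NP-hard \textsc{Densest $k$-Subgraph} problem, using the observation recorded at the start of this section that taking each edge of a graph as a unit-weight trajectory makes graph-TCP coincide with \textsc{Densest $k$-Subgraph}. Given an instance $(G=(V,E),k)$, I would place each vertex $v\in V$ as a point $p_v$ in the plane in general position---concretely, on the moment curve $(i,i^2)$ at integer $x$-coordinates, so that no three vertex-points are collinear---and for each edge $(u,v)\in E$ I would include the straight segment $\overline{p_up_v}$ in $\mathcal{T}$ with multiplicity $N_e:=\lfloor M/|\overline{p_up_v}|\rfloor$, where $M=\Theta(n^4)$ is polynomial in the input size. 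A fully captured edge then contributes weight in $[M-L_{\max},M]$, where $L_{\max}=O(n^2)$ is the largest segment length.

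The target equivalence is that $G$ has a $k$-clique iff the TCP instance admits a portal set of captured weight at least $\binom{k}{2}(M-L_{\max})$. The forward direction is immediate: placing portals at the $k$ clique-vertex points fully captures all $\binom{k}{2}$ clique-edges. For the converse I would use two ingredients. First, any two distinct portals lie on at most one common segment, because the unique line through two points contains at most one of the $G$-edges (by general position), so the set $F$ of captured edges (those carrying at least two portals) satisfies $|F|\le\binom{k}{2}$. Second, full capture of an edge requires portals at both of its endpoints. Combining these, any portal set reaching the threshold must fully capture $\binom{k}{2}$ edges, forcing all $k$ portals to sit at vertex-points pairwise joined by $G$-edges---i.e., at the vertices of a $k$-clique.

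The main obstacle is ruling out ``cheating'' configurations with interior portals, especially at crossings of non-adjacent edges; three pairwise-crossing edges, for instance, admit portals at all three crossings with $|F|=\binom{3}{2}$ without any $3$-clique in~$G$. A direct computation on the moment curve shows that for crossing chords $\overline{p_ip_j}$ and $\overline{p_kp_\ell}$ with $i<k<j<\ell$, the fractional crossing position on each chord lies in the interval $\bigl[1/(2n^2),\,1-1/(2n^2)\bigr]$, so the span captured on any edge by a pair of portals involving an interior point is bounded above by $\bigl(1-1/(2n^2)\bigr)L_e$. Hence each partial capture falls short of the full value by at least $M/(2n^2)$, and the aggregate shortfall over the at most $\binom{k}{2}$ partial captures in any ``no-clique'' configuration exceeds $\binom{k}{2}L_{\max}$ once $M=\Omega(n^2L_{\max})=\Omega(n^4)$. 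Choosing $M=\Theta(n^4)$ therefore cleanly separates ``clique'' from ``no-clique'' instances while keeping the reduction polynomial in $n$ and~$k$. Should the geometric bound prove delicate in subcases, a fallback is to append short ``anchor'' segments at each $p_v$ whose capture rewards vertex-placed portals, or to substitute a 3-SAT-style reduction using explicit line-segment gadgets.
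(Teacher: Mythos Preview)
Your route is genuinely different from the paper's. The paper reduces from \textsc{Hitting Lines}: it places the arrangement of $n$ input lines inside a tiny disk $D_\delta$, extends each line by a segment of length $\approx R\gg\delta$ into a surrounding annulus, and attaches to each extension an even longer segment (length $\approx R'\gg R$) ending at a common apex $p$. With a budget of $n+1+k$ portals, capturing the total length $\sum_i(|s'_i|+|\sigma_i|)$ forces one portal at $p$, one at each of the $n$ elbows $s'_i\cap\sigma_i$, and the remaining $k$ portals inside $D_\delta$ where they must hit every line. The three well-separated length scales do all the work---no crossing-position estimates, no multiplicities, no clique counting.

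Your \textsc{$k$-Clique} embedding (note: the threshold equivalence you state is a clique test, not \textsc{Densest $k$-Subgraph}) is a workable alternative, but the shortfall argument as written has holes. First, ``aggregate shortfall $\ge(\#\text{partial})\cdot M/(2n^2)$ exceeds $\binom{k}{2}L_{\max}$ once $M=\Omega(n^2L_{\max})$'' only fires when the number of partial captures is $\Omega(\binom{k}{2})$; with a single partial capture it would require $M=\Omega(n^2\binom{k}{2}L_{\max})$, not $\Omega(n^4)$. The fix is to observe that any interior portal leaves at most $k-1$ vertex portals, hence at most $\binom{k-1}{2}$ full captures and at most $2$ partials per interior portal, so for $k\ge 4$ the total drops below $(\binom{k}{2}-1)M$ outright; then $M>\binom{k}{2}L_{\max}$ (still $\Theta(n^4)$ with a large enough constant) suffices, with $k=3$ handled via the $\epsilon$-shortfall. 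Second, the all-vertex-portal, no-clique case (zero partial captures, at most $\binom{k}{2}-1$ fully captured edges) is not covered by the partial-shortfall paragraph at all and must be stated separately. Third, you are building a \emph{multiset} of $\Theta(n^{6})$ coincident segments; this is defensible in the TCP model (distinct trajectories with identical tracks) but should be said explicitly, and the moment-curve placement needs a genericity remark to rule out three concurrent chord lines. None of these is fatal, but the paper's length-scale reduction sidesteps all of them in a few lines.
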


The proof of Theorem~\ref{thm:general-hard} (in the Appendix) uses a
construction involving segments of \emph{many} orientations whose pairwise
intersections may only be single points. The following shows that
the TCP is already NP-hard for segments of \emph{two} orientations,
provided that two intersecting segments may be collinear, and three
different segments can intersect in a single point.

\begin{theorem}
  \label{thm:2orientation-hard}
The TCP is NP-hard, even for an input ${\cal T}$ of $n$ 
line segments of at most two orientations in the plane, with any two segments intersecting in at most a single point (there are no overlapping pairs of segments) but possible collinearity.
\end{theorem}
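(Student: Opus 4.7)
The plan is to design a polynomial-time reduction to the restricted TCP from a well-known NP-hard combinatorial problem that interacts naturally with a bipartite two-orientation grid, such as \textsc{Densest $k$-Subgraph} on bipartite graphs, a bipartite variant of \textsc{Max $k$-Coverage}, or an adapted version of whichever source problem drives the proof of Theorem~\ref{thm:general-hard}. The two hypotheses of the theorem that go beyond pure ``two orientations''---collinear-but-nonoverlapping segments and three segments meeting in a single point---are precisely the degrees of freedom that allow the richer-orientation construction to be squeezed into a purely horizontal/vertical layout.

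Concretely, for a bipartite source instance with parts $A$ and $B$, I would reserve one horizontal line $y=y_a$ for each $a\in A$ and one vertical line $x=x_b$ for each $b\in B$. For every incidence $(a,b)$ that the source instance records (an edge, a set--element pair, etc.), I would install a local \emph{socket} near the grid point $(x_b,y_a)$, built from a short horizontal segment on $y=y_a$ and a short vertical segment on $x=x_b$, crossing at $(x_b,y_a)$. Many sockets share a single row line but occupy pairwise disjoint $x$-intervals, which requires the collinearity-without-overlap allowance; a long ``selector'' segment along that row line (again collinear with, but disjoint from, the socket pieces) can be made to end exactly at a socket's anchor point, so that three distinct segments then meet at that point, which is what the triple-incidence allowance buys us. With weights assigned in three well-separated tiers---a dominant \emph{rail} tier, a combinatorial \emph{incidence} tier, and a negligible \emph{slack} tier---the unique optimal TCP strategy at an appropriately chosen budget $k_{\mathrm{TCP}}$ is forced to place two portals on each activated rail at socket anchors, and those choices correspond bijectively to feasible selections in the source instance, with total captured weight exactly mirroring the source objective.

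The main obstacle is the weight engineering needed to establish that no optimum TCP solution can do better than this canonical ``two portals per activated rail, each at a socket anchor'' scheme. I would argue this by a standard dominance/exchange argument: with the tier separation chosen large enough, any non-canonical portal---strictly between two sockets on a rail, or on a rail that is never activated from the other orientation---is strictly improvable by sliding it to an anchor or replacing it with a portal on an unused rail. A second, more mundane pitfall is enforcing the ``no overlapping pairs'' condition while stacking many sockets, rails, and selectors on a single shared line; this is handled by placing every short segment in a strictly disjoint interval along that line and budgeting the line's length accordingly. Both directions of the equivalence (feasible source solution $\Rightarrow$ TCP portal set of value $\ge V$, and TCP portal set of value $\ge V \Rightarrow$ feasible source solution) then follow from the dominance argument together with a routine read-off, completing the NP-hardness reduction under the restrictions of Theorem~\ref{thm:2orientation-hard}.
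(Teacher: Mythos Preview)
Your plan diverges substantially from the paper, which reduces directly from \textsc{3-SAT} rather than from a bipartite density or coverage problem. The paper places $m$ long vertical clause segments and $2n$ long vertical variable segments (each of length $nm$), and for every variable two horizontal chains of $m{+}1$ abutting unit segments (a ``true'' chain and a ``false'' chain) whose interior endpoints are nudged by $\Theta(\varepsilon)$ so that exactly the appropriate ones land on the relevant clause columns. With budget $4n+m+nm$ and target length $n(m{+}1)+2n^2m+nm^2-\tfrac12$, the long verticals dominate and force portals near both of their ends; the residual budget then admits exactly one full horizontal chain per variable, and each clause column receives its needed second portal only if some chosen chain touches it---i.e., only if that clause is satisfied.

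Your sketch has a concrete gap at the point where the source objective is supposed to be read off the captured length. In TCP a segment is captured only when it carries \emph{two} portals. By your own description, each short horizontal socket piece on row $y_a$ is disjoint from the long row selector except at a single shared anchor endpoint; so placing your ``two portals per activated rail at socket anchors'' captures the selector between them but leaves every socket piece with at most one portal, hence uncaptured. Nothing in the construction exhibits a segment that is captured precisely when ``$a$ is selected and $b$ is selected,'' which is exactly what a \textsc{Densest $k$-Subgraph} or \textsc{Max $k$-Coverage} encoding needs; and the obvious candidate---a single horizontal segment from a fixed row-$a$ anchor out to $(x_b,y_a)$---would overlap the analogous segment for any other $b'$ on the same row, violating the non-overlap hypothesis. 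The paper's \textsc{3-SAT} gadget avoids this obstacle altogether: its horizontal component is a chain of abutting unit segments, each captured by two \emph{adjacent} portals, so a linear portal budget along one chain yields linear captured length and the binary choice of chain encodes the truth value directly.
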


\subsubsection{Approximation Algorithms}

Consider first the case in which the set ${\cal T}$ of input
trajectories is the (disjoint) union of $K$ subsets of trajectories,
${\cal T}={\cal T}_1\cup {\cal T}_2\cup \cdots \cup {\cal T}_K$, with
each subset ${\cal T}_i$ having the following {\em path property}:
For any connected component of the intersection graph of ${\cal T}_i$,
the trajectories in that component are all subpaths of some path in the union of ${\cal T}_i$.
This condition holds, for example, if ${\cal T}$ is a set of line segments of
$K$ distinct orientations.

\begin{theorem}
The TCP for an input set ${\cal T}={\cal T}_1\cup {\cal T}_2\cup
\cdots \cup {\cal T}_K$, with each subset ${\cal T}_i$ having the path property,
has a polynomial-time $K$-approximation algorithm.
\end{theorem}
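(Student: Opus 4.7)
The plan is to combine the one-dimensional exact algorithm with a simple averaging argument. The first move is to observe that the path property is exactly what is needed to reduce the problem on each $\mathcal{T}_i$ to the one-dimensional TCP. Specifically, I would note that if $C$ is a connected component of the intersection graph of $\mathcal{T}_i$, then all trajectories of $C$ lie as subpaths of a common host path $\pi_C$; any captured subtrajectory between two portals therefore also lies on $\pi_C$, and portals not on $\pi_C$ contribute nothing. Parameterizing $\pi_C$ by arc length maps the trajectories of $C$ to a set of integer subpaths on a line, so solving the TCP on $C$ with any given budget $\kappa$ of portals reduces to an instance of the one-dimensional TCP already shown to be polynomially solvable.

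Next I would turn the per-component solver into a solver for $\mathcal{T}_i$ as a whole. Let $C_{i,1},\ldots,C_{i,m_i}$ be the components of the intersection graph of $\mathcal{T}_i$, and for each $j$ and each $0\le\kappa\le k$ let $v_{i,j}(\kappa)$ denote the optimal captured weight on $C_{i,j}$ using $\kappa$ portals; these values are all obtainable by the slight modification of the 1D algorithm that tracks the budget. A standard knapsack-style dynamic program then computes
\[
W_i(k)\;=\;\max\Bigl\{\sum_{j=1}^{m_i} v_{i,j}(\kappa_j)\;:\;\kappa_j\ge 0,\;\sum_j \kappa_j=k\Bigr\},
\]
in polynomial time, together with an actual portal set $P_i$ witnessing it, since the host paths for distinct components of the intersection graph of $\mathcal{T}_i$ cannot share captured sub-trajectories. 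The algorithm outputs the best such $P_i$.

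Finally, I would establish the approximation ratio by an averaging argument. Let $P^\ast$ be an optimal portal set for $\mathcal{T}$ with captured weight $\mathrm{OPT}=\sum_{i=1}^{K}\sum_{\tau\in\mathcal{T}_i}f_{P^\ast}(\tau)$. For each $i$, the set $P^\ast$ is a feasible $k$-portal solution to the restricted TCP on $\mathcal{T}_i$, hence $W_i(k)\ge\sum_{\tau\in\mathcal{T}_i}f_{P^\ast}(\tau)$. Summing over $i$ gives $\sum_{i=1}^K W_i(k)\ge\mathrm{OPT}$, so $\max_i W_i(k)\ge\mathrm{OPT}/K$, which is the desired $K$-approximation.

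The step I expect to need the most care is the first one, making precise that the path property really does let us pretend each component is one-dimensional, in particular verifying that we may restrict optimal portal placement to the host path of each component and that captured subtrajectories inside $C_{i,j}$ do not interact across components. Everything else (the knapsack DP over components and the averaging) is routine once that reduction is in hand.
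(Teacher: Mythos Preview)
Your proposal is correct and takes essentially the same approach as the paper: solve each $\mathcal{T}_i$ optimally via the one-dimensional DP and return the best of the $K$ solutions, with the $K$-approximation following from the averaging inequality $\sum_i W_i(k)\ge \mathrm{OPT}$. Your explicit knapsack over connected components is a careful way to fill in a detail the paper glosses over (equivalently, one can concatenate the host paths of the components of $\mathcal{T}_i$ onto a single line with gaps and run the 1D DP once), but the overall strategy is identical.
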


\begin{proof}
Within each class ${\cal T}_i$, the path property implies that the trajectories behave like intervals on a
line, so our one-dimensional dynamic programming solution applies.  Selecting the best solution that uses all $k$ points for one of the ${\cal T}_i$ gives
a polynomial-time algorithm with approximation ratio $K$.
\end{proof}

\begin{theorem}
  The TCP for an input set ${\cal T}$ of arbitrarily
  overlapping/crossing trajectory paths in the plane having bounded
  depth $\Delta$ (i.e., no point of $\Re^2$ lies in more than $\Delta$ input
  trajectories) has a polynomial-time $\Delta$-approximation algorithm, for any even number, $k$, of portals.
  (If $k$ is odd, the approximation factor is at most $\Delta(1+\frac{1}{k-1})$.)
\end{theorem}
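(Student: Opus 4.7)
The plan is to design a matching-based algorithm and analyze its approximation ratio via an auxiliary multigraph on the optimal solution's portals.

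For the algorithm, take as candidate portal locations the intersection points and endpoints of the trajectories. For each pair $(p,q)$ of candidates, compute the best-trajectory weight $w(p,q) := \max_{\tau \ni p, q} d_\tau(p,q)$, i.e., the largest inter-portal distance achievable by capturing a single trajectory with this pair. Find a maximum-weight matching of exactly $k/2$ edges in the resulting weighted graph---this can be done in polynomial time using standard max-weight matching with a size constraint---and output the $k$ matched portals.

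For the analysis, let $P^*$ be an optimal $k$-portal set of value $\text{OPT}$ and build a multigraph $H$ on the vertex set $P^*$ by inserting, for each trajectory $\tau$ captured by $P^*$, an edge between the first and last $P^*$-portals on $\tau$ with weight $f_{P^*}(\tau)$. Then $H$ has total edge weight exactly $\text{OPT}$, and the depth hypothesis implies that each vertex of $H$ has degree at most $\Delta$: a portal lies on at most $\Delta$ trajectories and hence is first/last on at most $\Delta$ of them. A bounded-degree multigraph admits a matching of weight at least $\text{OPT}/\Delta$ (via a fractional edge decomposition into $\Delta$ near-matchings on a $\Delta$-regular or Eulerian substructure); call such a matching $M^*$. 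Crucially, each edge of $M^*$ corresponds to a \emph{distinct} captured trajectory by construction. Since $w(p,q)$ upper-bounds the $H$-weight of $(p,q)$, the algorithm's $w$-weighted matching matches or exceeds $M^*$'s weight, so the output portals (padded arbitrarily to $k$ if $|M^*| < k/2$) capture value at least $\text{OPT}/\Delta$.

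For odd $k$, apply the even-$k$ algorithm with $k-1$ portals and add one arbitrary extra portal. An averaging argument---the least-contributing portal of a $k$-portal optimum accounts for at most $\text{OPT}/k$ of the captured value---gives a $(k-1)$-portal optimum of at least $\frac{k-1}{k}\text{OPT}$, which yields a ratio of $\Delta \cdot \frac{k}{k-1} = \Delta\bigl(1 + \tfrac{1}{k-1}\bigr)$.

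The hard part will be the matching-weight bound $\text{OPT}/\Delta$ in $H$: Vizing's theorem only guarantees $\text{OPT}/(\Delta+1)$ in simple graphs, and Shannon-type bounds for multigraphs are weaker still. Obtaining the tight $1/\Delta$ factor requires exploiting specifics of $H$ (notably that each trajectory contributes only a single edge, labeled by that trajectory) through a careful fractional coloring or an amortized accounting based on depth. A secondary technical step is arguing that the algorithm's pair-weight sum translates losslessly into captured value even when matched pairs share their best trajectory; this can be handled either by a $b$-matching reformulation that enforces trajectory-distinctness, or by a direct depth-based bookkeeping on the output.
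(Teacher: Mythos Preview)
Your proposal has real gaps at exactly the places you flag as ``the hard part'' and ``a secondary technical step''.

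First, the key matching bound is false. A multigraph of maximum degree $\Delta$ need not contain a matching of weight at least $(\text{total weight})/\Delta$. Take two disjoint triangles of unit segments in the plane: depth $\Delta=2$, and with $k=6$ portals at the six corners the optimum captures all six segments, so $\text{OPT}=6$. Your multigraph $H$ is then two vertex-disjoint triangles with total edge weight $6$, but every matching in $H$ has weight at most $2<6/2$. Nothing about the construction of $H$ rules this out, so the appeal to ``specifics of $H$'' cannot succeed; $H$ can be an arbitrary degree-$\le\Delta$ multigraph on $k$ vertices.

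Second, the odd-$k$ averaging step is wrong. For a single triangle of unit segments ($\Delta=2$, $k=3$) one has $\text{OPT}(3)=3$, yet removing any corner leaves two portals capturing a single segment, so $\text{OPT}(2)=1<\tfrac{2}{3}\cdot 3$. The flaw is that a portal's removal can cost far more than an ``average share'': on any trajectory with exactly two portals, deleting either one forfeits the whole captured piece, so the sum of single-removal losses can be $2\cdot\text{OPT}$.

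Third, the translation from matching weight to captured value is genuinely problematic when two matched pairs share their best trajectory, and you have not given a fix.

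The paper avoids all of this with a much more elementary argument. The algorithm simply selects the two endpoints of each of the $\lfloor k/2\rfloor$ longest trajectories, capturing $L=\ell_1+\cdots+\ell_{\lfloor k/2\rfloor}$. For the upper bound on $\text{OPT}$, replace every optimal portal by up to $\Delta$ ``clones'', one per trajectory through it; the at most $k\Delta$ clones each lie on a single trajectory, so together they can fully capture at most $\lfloor k\Delta/2\rfloor$ trajectories, whence $\text{OPT}\le \ell_1+\cdots+\ell_{\lfloor k\Delta/2\rfloor}$. Dividing gives $\text{OPT}/L\le \lfloor k\Delta/2\rfloor / \lfloor k/2\rfloor$, which is $\Delta$ for even $k$ and at most $\Delta(1+\tfrac{1}{k-1})$ for odd $k$.
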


\begin{proof}
  Consider an optimal set, $H^*$, of $k$ hit points, capturing total
  trajectory length $L^*$. For each hit point $h\in H^*$, which lies
  on $\delta\leq \Delta$ trajectories of ${\cal T}$, we replace $h$
  with $\delta$ copies (``clones'') of the point $h$, with one clone
  associated with each of the $\delta$ trajectories that $h$ hits.  In
  total there are at most $k\Delta$ clones.  Consider any trajectory
  $\tau\in{\cal T}$, and consider the clones (copies of hit points)
  that lie along $\tau$.  If there are at least 2 clones on $\tau$,
  then the portion of $\tau$ that lies between the extreme clones on
  $\tau$ is captured; the length of this portion is at most the length
  of $\tau$. The $k\Delta$ clones capture portions of at
  most $\lfloor k\Delta/2 \rfloor$ trajectories, resulting in total
  captured length $L^*\leq \ell_1+\ell_2+\cdots + \ell_{\lfloor
    k\Delta/2 \rfloor}$, where $\ell_i$ denotes the length of the
  $i$th longest trajectory of ${\cal T}$ ($\ell_1\geq \ell_2\geq
  \ell_3\geq \cdots$).

  Now consider the simple greedy algorithm that places
  hit points at the 2 endpoints of the $\lfloor k/2\rfloor$ longest
  trajectories of ${\cal T}$, using at most $k$ total hit points.  This algorithm 
  captures length $L=\ell_1+\ell_2+\cdots + \ell_{\lfloor k/2\rfloor}$.
  The approximation ratio is at most
    \[\frac{L^*}{L} \leq { \lfloor k\Delta/2\rfloor  \over  \lfloor k/2\rfloor }.\]
  Thus, $\frac{L^*}{L} \leq \Delta$ for even $k$. For odd $k$, the denominator
  is exactly $(k-1)/2$, while the numerator is either $k\Delta/2$ (if $\Delta$ is even) or $(k\Delta -1)/2$ (if $\Delta$ is odd); thus, $\frac{L^*}{L} \leq \frac{k\Delta/2}{(k-1)/2} = \Delta(1 + \frac{1}{k-1})$.
\end{proof}
%

\section{Algorithm Engineering}\label{sec:practice} 
As the TCP can be considered an optimization problem on a weighted graph, 
we can use approaches such as Integer Linear Programming and local search heuristics. Given the geometric origins of
the TCP, we consider geometric aspects; in addition, dealing with geometric
data involved a number of other aspects of algorithm engineering,
such as accuracy and correctness when handling locations, coordinates, and
intersections.

\subsection{Integer Linear Programming}
\label{sec:IP}

\subsubsection{An IP Formulation}
As a problem of combinatorial optimization, the TCP can be modeled
as an Integer Linear Program (IP), for which solutions can be computed
with the help of powerful IP solvers.  The following IP models the TCP.

\medskip
$
\begin{array}{llr}
\medskip
  &\max  \sum_{\tau \in \mathcal{T}, e\in E(\tau)}f(e) x_{\tau, e}\\
\smallskip
  &\hspace*{.6cm}\sum_{v\in V} y_v \leq  k & \text{(Constraint 1)}  \\
  \forall \tau=(v_0, \ldots, v_l)\in \mathcal{T}:\nonumber\\
  \ \ \ \ \ \forall i\in 0,\ldots, l-1:&  \left\{ \begin{array}{ccll}
    	x_{\tau, v_iv_{i+1}} & \leq & y_i \quad & \text{if $i=0$,} \\
    	x_{\tau, v_iv_{i+1}} & \leq & y_i + x_{\tau, v_{i-1}v_i} \quad   & \mbox{ else} \\
  \end{array}\right.&\text{(Constraint 2)}  \\
  \ \ \ \ \ \forall i\in 1,\ldots, l:&  \left\{ \begin{array}{ccll}
    	x_{\tau, v_{i-1}v_{i}} &\leq& y_i \quad &\text{if $i=l$,} \\
    	x_{\tau, v_{i-1}v_{i}} &\leq& y_i + x_{\tau, v_{i}v_{i+1}} \quad   &\mbox{ else} \\
  \end{array}\right. & \text{(Constraint 3)} \\
\smallskip
  \forall v\in V, \tau\in e\in \tau: &\omit\rlap{\hspace*{.6cm} $x_{\tau, e}, y_v \quad \in \quad \{0,1\} $}
\end{array}
$

\medskip
We have two types of Boolean variables: $y_v$, for $v\in V$, which indicates if
node $v$ is one of the $k$ selected portals, and $x_{\tau, e}$, for edge $e\in
E$ on trajectory $\tau$, which indicates if the portion $e$ of trajectory
$\tau$ is captured by selected portals.  For an edge $e$, there are distinct
variables, $x_{\tau, e}, x_{\tau',e}$, for trajectories $\tau\neq \tau'$,
because $e$ can be captured in $\tau$ but not in $\tau'$.

Our objective function  maximizes the weighted sum of captured trajectory edges, where $E(\tau)$ denotes the edges of $\tau$ in $\mathcal{G}$, and $f(e)$ is the weight (i.e. length) of edge $e$.
(Optionally, we could have trajectory-dependent weights on edges.)
Constraint~1 limits the number ($\leq k$) of selected portals.
Constraints~2 and 3 enforce that, in order for an edge to be captured as part of trajectory $\tau$, there must be a selected portal in each direction; either there is a selected portal at the next node, or the following trajectory edge is also captured.
In the latter case, because $\tau$ has no cycle (it is a simple path), there must be a selected portal on $\tau$ at some point in that direction if any portion of $\tau$ is to be captured.

For an example, see Section~\ref{sec:example} in the Appendix.

\subsubsection{Fractional Solutions}

Relaxing the integrality constraints of the IP may result in fractional
solutions.  We show (in the Appendix) that the gap (ratio) between the best
fractional and the integral (optimal) solution objective functions can be
arbitrarily large, for any fixed~$k$.

\begin{theorem}
  \label{thm:gap}
  The integrality gap for the TCP IP
  can be arbitrarily large for any~$k$.
\end{theorem}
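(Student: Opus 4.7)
The plan is to exhibit an instance family $\mathcal{I}_n$ (one per $n \ge k$) on which the LP optimum grows without bound while the IP optimum stays fixed. The construction I would use is exactly the one implicit in the paper's own densest-$k$-subgraph reduction at the start of Section~\ref{sec:theory}: take the underlying graph $\mathcal{G}$ to be the complete graph $K_n$ with unit edge weights, and let $\mathcal{T}$ consist of one trajectory per edge of $K_n$, each being the single edge between its two endpoints.

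For the IP optimum, every trajectory $\tau=(u,v)$ has $l=1$, so Constraint~2 at $i=0$ specialises to $x_{\tau,uv} \le y_u$ and Constraint~3 at $i=l$ specialises to $x_{\tau,uv} \le y_v$. Such a trajectory therefore contributes to the objective only when both $u$ and $v$ are portals. Since $\mathcal{G}=K_n$, any integral choice of $k$ portals captures exactly the $\binom{k}{2}$ trajectories sitting among them, giving $\mathrm{IP}(\mathcal{I}_n)=\binom{k}{2}$.

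For the LP relaxation I would set $y_v = k/n$ for every vertex $v$ and $x_{\tau,uv} = k/n$ for every trajectory. This is feasible whenever $n \ge k$: we have $\sum_v y_v = k$, all $y_v, x_{\tau,uv} \in [0,1]$, and both Constraint~2 and Constraint~3 are satisfied with equality. Summing the objective over all $\binom{n}{2}$ trajectories yields $\mathrm{LP}(\mathcal{I}_n) \ge \binom{n}{2} \cdot \tfrac{k}{n} = \tfrac{k(n-1)}{2}$, so the integrality gap is at least
\[
\frac{\mathrm{LP}(\mathcal{I}_n)}{\mathrm{IP}(\mathcal{I}_n)} \;\ge\; \frac{k(n-1)/2}{\binom{k}{2}} \;=\; \frac{n-1}{k-1},
\]
which tends to infinity as $n \to \infty$ for any fixed $k \ge 2$. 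The entire argument is essentially a short feasibility check, so there is no real technical obstacle; the only insight needed is to observe that the same $K_n$ reduction the authors already invoke to inherit NP-hardness from \textsc{Densest $k$-Subgraph} also serves, verbatim, as a natural unbounded-gap family for the TCP LP.
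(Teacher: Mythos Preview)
Your proof is correct and rests on the same core idea as the paper's: take the complete graph $K_n$, make every edge a trajectory, and assign the uniform fractional value $y_v=k/n$ at each vertex so that the LP captures a $k/n$-fraction of every trajectory while the IP can only cover $\binom{k}{2}$ of them.

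The one noteworthy difference is that the paper realises this construction \emph{geometrically}: it places the $n$ vertices as evenly spaced points on a unit-diameter circle and takes the $\binom{n}{2}$ chords as line-segment trajectories in the plane. This forces the proof to work in the arrangement graph (with all chord crossings as intermediate nodes), to bound the IP optimum only crudely by $k^2$, and to estimate chord lengths to lower-bound the LP objective. Your abstract $K_n$ version bypasses all of that bookkeeping and even yields the exact IP optimum $\binom{k}{2}$ and the sharper gap $(n-1)/(k-1)$. On the other hand, the paper's version buys something yours does not: it shows the unbounded gap already occurs for planar line-segment instances, which is the geometric setting the rest of the paper cares about. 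For the theorem as stated (about the IP in general), your argument suffices and is the cleaner of the two.
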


For instances arising from non-overlapping (i.e., no parallel
segments may share more than one point) axis-parallel segments, we can bound
the integrality gap, because the particularly bad ``clusters'' of the general
case cannot occur.

\begin{theorem}
  \label{thm:integrality-gap}
  For trajectories ${\cal T}$ arising from  non-overlapping axis-parallel line
segments, the integrality gap is at most $\frac{k}{\lfloor k/2\rfloor}$, for
$k\geq 2$. 
\end{theorem}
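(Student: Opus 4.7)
The plan is to lift the charging argument from the $\Delta$-approximation theorem, applied with $\Delta=2$, from the integer optimum to the LP optimum, exploiting the fact that for (generic) non-overlapping axis-parallel segments the incidence depth is $\Delta=2$: at most one horizontal and one vertical trajectory pass through any point.

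First I would bound the fractional capture of each trajectory by the LP portal mass on it. For an optimal LP solution $(x^{*},y^{*})$ and a trajectory $\tau=(v_{0},\ldots,v_{l})$, telescoping Constraint~2 forward and Constraint~3 backward yields
\[
x^{*}_{\tau,e_{i}} \;\leq\; \min\Bigl(\sum_{j<i} y^{*}_{v_{j}},\; \sum_{j\geq i} y^{*}_{v_{j}}\Bigr) \;\leq\; \tfrac{1}{2}\,Y_{\tau},
\]
where $Y_{\tau}:=\sum_{j} y^{*}_{v_{j}}$. Combined with the LP upper bound $x^{*}_{\tau,e_{i}}\leq 1$, the fractional capture of $\tau$ satisfies $L^{\mathrm{LP}}(\tau)=\sum_{i} f(e_{i})\,x^{*}_{\tau,e_{i}}\leq |\tau|\cdot\min(1,\,Y_{\tau}/2)$.

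Next I would aggregate across trajectories. Setting $a_{\tau}:=2\,L^{\mathrm{LP}}(\tau)/|\tau|$, we have $a_{\tau}\in[0,2]$ and $a_{\tau}\leq Y_{\tau}$. The depth-2 property gives
\[
\sum_{\tau} Y_{\tau} \;=\; \sum_{v} y^{*}_{v}\cdot|\{\tau:v\in\tau\}|\;\leq\; 2\sum_{v}y^{*}_{v} \;\leq\; 2k,
\]
hence $\sum_{\tau} a_{\tau}\leq 2k$. Since $L^{\mathrm{LP}}=\tfrac{1}{2}\sum_{\tau} a_{\tau}\,|\tau|$ is maximized over $a_{\tau}\in[0,2]$ with $\sum_{\tau} a_{\tau}\leq 2k$ by saturating $a_{\tau}=2$ on the $k$ longest trajectories, we obtain $L^{\mathrm{LP}}\leq \ell_{1}+\cdots+\ell_{k}$, where $\ell_{1}\geq\ell_{2}\geq\cdots$ are the sorted trajectory lengths. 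On the integer side, placing both endpoints of each of the $\lfloor k/2\rfloor$ longest trajectories uses at most $k$ portals and captures $\ell_{1}+\cdots+\ell_{\lfloor k/2\rfloor}$, so
\[
\frac{L^{\mathrm{LP}}}{\mathrm{IP}} \;\leq\; \frac{\ell_{1}+\cdots+\ell_{k}}{\ell_{1}+\cdots+\ell_{\lfloor k/2\rfloor}} \;\leq\; \frac{k}{\lfloor k/2\rfloor}
\]
by the standard sorted-sum inequality (tight when all $\ell_{i}$ are equal).

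The main obstacle will be establishing $\Delta=2$ rigorously under the paper's \emph{non-overlapping} definition (parallel segments share at most one point), since two collinear horizontal segments meeting at an endpoint through which a perpendicular segment also passes would locally give depth $3$. I expect to handle this either by a light general-position assumption or by a more careful accounting at such isolated coincident vertices: the extra $Y_{\tau}$ mass there can only advance captures of two collinear trajectories that can be coalesced without loss in the LP, so the top-$k$ sorted-sum argument should still deliver the bound $k/\lfloor k/2\rfloor$.
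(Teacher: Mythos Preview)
Your proof is correct and follows essentially the same strategy as the paper's: bound the LP optimum above by $\ell_1+\cdots+\ell_k$ via the depth-$2$ structure of non-overlapping axis-parallel segments, bound the IP optimum below by $\ell_1+\cdots+\ell_{\lfloor k/2\rfloor}$ via the endpoint construction, and take the ratio. The paper packages the first step as an auxiliary relaxation LP$_2$ (splitting each portal into a horizontal and a vertical copy with doubled budget $2k$, so that every portal lies on a single segment) rather than your explicit telescoping-plus-fractional-knapsack bound, but the underlying argument---including the collinear-endpoint degeneracy you flag---is the same.
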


\begin{proof}
  We can easily get an integral solution by simply capturing the $\lfloor
\frac{k}{2} \rfloor$ longest trajectories (segments) by selecting their at most
$k$ endpoints as portals.

  We create a new LP instance, called LP$_2$, by including two copies $v_1,v_2$
of each portal variable $v$, so that one copy lies only on horizontal segments,
while the other lies only on vertical segments.  We constrain
$y_{v_1}=y_{v_2}=y_v$ and allow a budget of $2k$ portals for LP$_2$.  Any
feasible values for the $y_i$ in the original LP solution are still feasible in
LP$_2$ (setting both copies), so the optimal solution of LP$_2$ is an upper
bound for the original LP.  Because segments do not overlap, every portal now
lies only on a single segment.
  Thus the optimal solution for LP$_2$ covers the $\frac{2k}{2}=k$ longest segments. This shows the integrality gap is at most $\frac{k}{\lfloor k/2\rfloor}$.
\end{proof}

The bound of Theorem~\ref{thm:integrality-gap} is tight for $k=2$: Consider four segments that are edges of a unit square; then, $k=2$ portals can capture at most length 1, while a fractional value of $1/2$ at each of the four corners yields objective value $4/2=2$ for the LP.  
For $k\geq 4$, it becomes increasingly difficult to build instances with a high integrality gap.

\subsection{Heuristics}
\label{Sec:Heu}

Integer Linear Programming solvers can provide provably optimal or near-optimal
solutions for relatively large instances. However, eventually runtime
and memory requirements become a limiting factor for large enough instances,
so it becomes important to develop effective heuristics. We considered
a spectrum of heuristics: 
\emph{Greedy}, which constructs solutions from scratch by locally optimal choices; 
\emph{Iterated Local Search}, which 
iteratively improves a current solution by finding a better one in its local neighborhood;
\emph{Simulated Annealing}, which uses a ``temperature'' function that governs
the probability of temporarily accepting a worse solution during a local search;
and \emph{Genetic Algorithms}, which maintain a selection of solutions that are
locally modified and combined to achieve gradually better solutions.

{\em Greedy} begins by selecting the two  portals at the ends of the
longest trajectory, and then incrementally, greedily selects portals
that in each step increase the total captured length as much as possible.
{\em Greedy} can be fooled and give poor solutions; it can,
though, serve to give a reasonable starting solution for our other
metaheuristics.

{\em Iterated Local Search} (ILS) is a basic metaheuristic that, given
an initial solution, iteratively replaces the current solution with
the best solution found by applying a single local modification, until
no further improvement can be achieved. The set of solutions that can
be obtained by a single local modification from a specific solution is
called its {\em neighborhood}. For a local modification operator based
on changing a single portal, the neighborhood consists of all
solutions that differ in exactly one portal. The smaller the
neighborhood, the faster the best solution within it can be found;
however, a smaller neighborhood also reduces the search space and
correspondingly can reduce the quality of the obtained solutions. We
considered {\em global
neighborhoods}, based on moving a random portal to an alternative
random candidate node, and {\em local neighborhoods}, based on moving
a single portal to positions adjacent to other (unmoved)
portals.
ILS is initialized with any reasonable solution; after some
experimentation with alternatives (e.g., random selection), we settled
on using {\em Greedy} as the starting solution for ILS.

{\em Simulated Annealing} is similar to ILS, but instead of searching
for the best solution in the neighborhood, it selects a random
solution in the neighborhood and moves to it if (1) it is
an improvement, or (2) if it is not an improvement, but it passes a random test. 
The probability of moving to a worse solution is
determined by a ``temperature function'' and decreases with
time. Initially, it can easily escape local optima;
when the search satisfies a
termination criterion, it returns the best solution it found.

We considered three different termination criteria: 
the total number of iterations, the number of iterations without an improvement,
and the total runtime. For temperature regulation, we used a geometric reduction
by a constant multiplicative amount; for diversification we ``reheated" the
temperature to the start temperature when we did not change the solution 
for a certain number of iterations. For translating the temperature function
to a probability function, we used the Bolzmann 
function: $\mathrm{bolzmann}(s',s'',T) := \exp ^{(- \frac{s'-s''}{T})}$, where $s'$ and $s''$ are the captured weight of two neighboring solutions.
In addition, we used parallelization for pursuing multiple searches from
different starting points.

{\em Evolutionary algorithms (EAs)} are motivated by the way
adaption to environmental conditions happens in nature.
They maintain a ``population'' of current solutions. At each step,
the EA produces new solutions through mutations 
(i.e., local changes) and recombination
(combining pieces of solutions in the current population to create new ones).
Then, the EA keeps the best solutions (previous or new) to maintain a stable population size. 
We create the initial population by a version of Greedy that starts with a random segment
instead of the longest one.
For mutations, we used ILS or SA.
The probability of selection for recombination is $\frac{f(s)-f(s_{\min})}{\sum_{s'\in S} f(s')}$.
We used uniform random crossover.

\subsection{Generating Benchmark Instances}
\label{Sec:Gen}
Our IP and heuristic methods apply to general sets of trajectories ${\cal T}$, 
given by spatiotemporal or combinatorial data.
We focus on geometric instances, most of which are based on line-segment trajectories. 
Instances based on random segments tend to be very easy to solve
because most vertices have degree 2. So we generated instances based
on a set of seed points and selected segments linking them, resulting
in arrangement graphs with multi-trajectory intersections and more complicated covering
graphs. Alternatively, we tested adding
new intersection points to the set of seed points when
incrementally constructing the arrangement.
For all methods, we used exact intersection point computations from the {\sc Computational Geometry
Algorithms Library} (CGAL)
to overcome problems of floating point precision for large instances.
We generated seed points randomly, using a variety of spatial
distributions, including uniform distributions, point sets from the
TSP benchmark library~\cite{TSPLIB}, and point sets with density
distributions based on light maps, corresponding to population
densities (see \cite{dhh+-cnpmp-17}).

\subsection{Experimental Evaluation}
\label{sSec:Test} 
All experiments were performed on a single Intel(R) Core(TM) i7-4770 ($4\times\SI{3.4}{\GHz}$) with \SI{32}{\giga\byte} and CPLEX (V12.7.1 with default settings), with a time limit of \SI{900}{\second}.
The code and data is available at \url{https://github.com/ahillbs/trajectory_capturing}.

\subsubsection{Integer and Linear Programming}
\label{IPComp}
We first consider the sizes of instances that our IP can solve to
optimality within a 900-second time limit, and
we consider which factors contribute to the difficulty of the
instance.

\begin{figure}
  \centering
  \begin{subfigure}[t]{0.49\textwidth}
    \includegraphics[width=\textwidth]{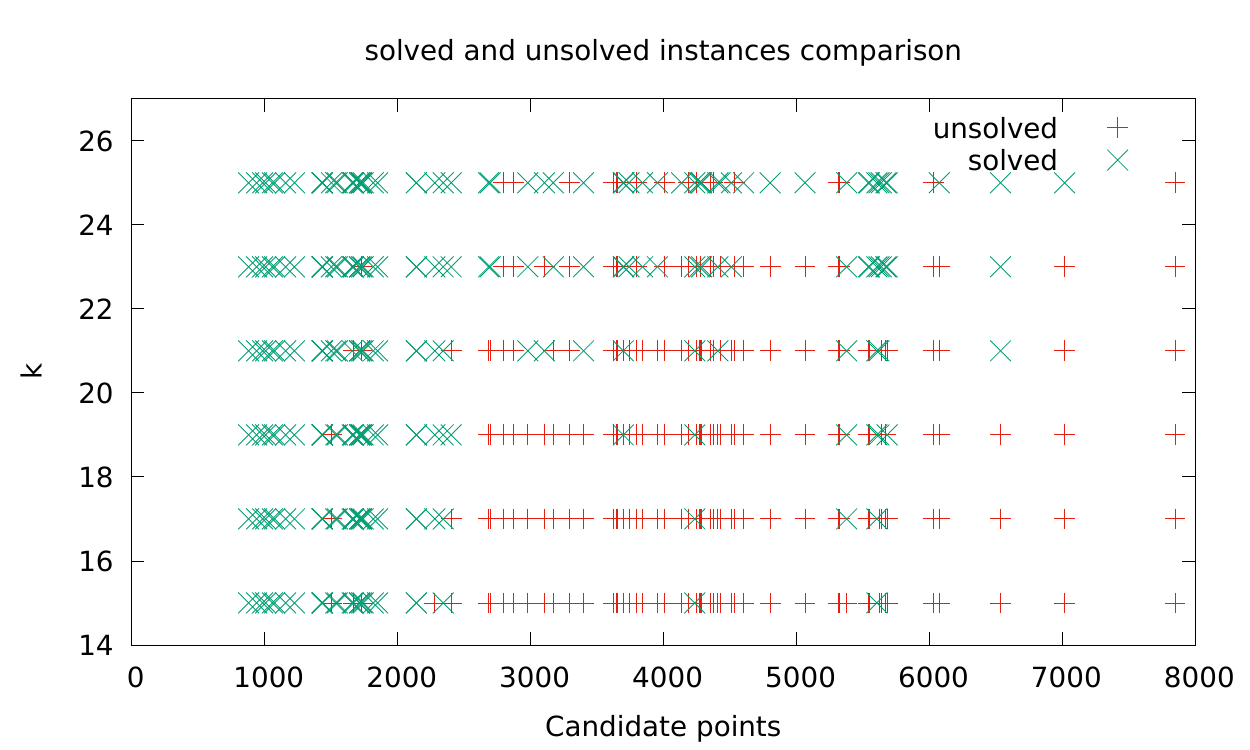}
    \subcaption{Solved and unsolved instances.}
    \label{fig:SolvedUnsolved_L}
  \end{subfigure}
  \begin{subfigure}[t]{0.49\textwidth}
    \includegraphics[width=\textwidth]{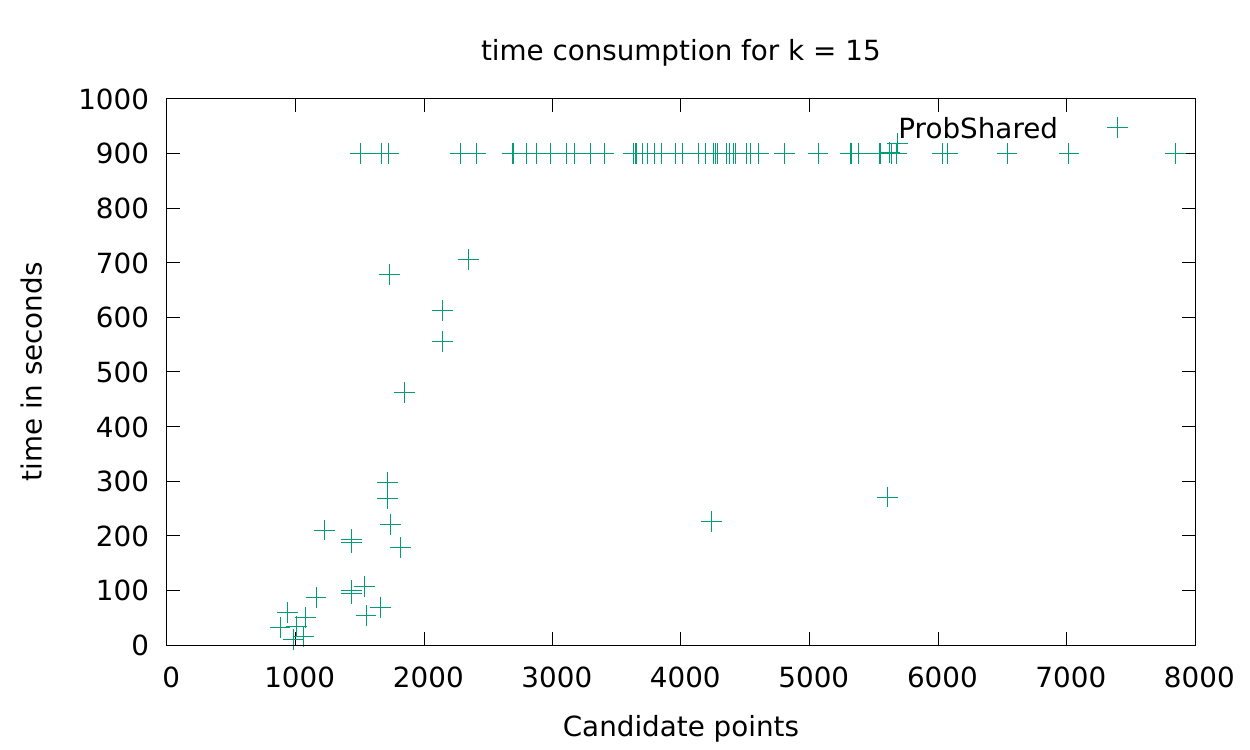}
    \subcaption{Time used for instances with $k = 15$}
    \label{fig:ProbK15Time_L}
  \end{subfigure}
  \caption[Probabilistic instances test results (large)]{
    Test results for solved and unsolved instances using the IP. The number of seed points varies from $35$ up to $55$ points. All tests were performed with a time limit of 900 seconds. }
    \label{fig:ProbResults_L}
\end{figure}

We varied the number of seed points between $35$ and $55$ and varied the (uniform) probability
of a segment connecting two seeds between $10\%$ and $20\%$.
In Figure~\ref{fig:SolvedUnsolved_L},
we see that instances with up to about $2500$ candidate points
(i.e., nodes at intersection points in the arrangement, where portals can be placed)
can be solved for $15
\leq k \leq 23$ to provable optimality within the time limit.
Instances with more than $2500$ candidate points are most often not solved for
$k < 23$.  For instances with $k \geq 23$, the problem seems to be easier to
solve.  
In Figure~\ref{fig:ProbK15Time_L} we see that for $ k = 15$ and
between $1500$ and $2000$ candidate points, instances start to become very
difficult to solve.  However, for $k \geq 23$, instances are
still solvable for more than $2500$ candidate points.

\subsubsection{Heuristic Methods}
\textbf{Neighborhoods for Local Methods.}
For modifying a given solution, we considered global neighborhoods, 
in which a portal is moved to an arbitrary other position, 
and local neighborhoods, in which a portal is only moved to positions that connect to another portal.
Using global neighborhoods, all solutions are theoretically quickly reachable
but they are significantly larger than local neighborhoods and, thus, a
meta-heuristic may not work in a focused enough way.
Details of this comparison can be found in Appendix~\ref{sec:local-global}; in particular
Figure~\ref{fig:NeighComp} shows our experimental evaluation.
Iterated Local Search yields the same solution quality for both neighborhoods;
with global neighborhood, only the runtime increases.
Simulated Annealing with global neighborhoods barely improves the initial greedy solution,
while it gives the best solutions with local neighborhoods.

As a result, we used local neighborhoods for all meta-heuristics.

\textbf{Mutation Strategy for Evolutionary Algorithms.}
For evolutionary algorithms, choosing the right kinds of mutations is of crucial importance,
as these allow reaching solutions that are not achievable only via recombination.
Practical usefulness requires focused mutations that have a high probability of being useful,
instead of purely random changes.
That is why we considered Iterated Local Search and Simulated Annealing as mutation operations.
As Simulated Annealing has a longer runtime, we used a faster terminating version (with potentially worse solutions) when using it for mutation.
In the following we refer to the version with Simulated Annealing as EASA and with Iterated Local Search as EAIS.
We have a start with 100 solutions and keep an ongoing population of 50 solutions.
The evolutionary algorithm stops after 15 minutes (but can take slightly longer to finish the last round) or if it has not found an improvement for multiple rounds.

\begin{figure}[h]
  \centering
  \begin{subfigure}[t]{0.49\textwidth}
    \includegraphics[width=\textwidth]{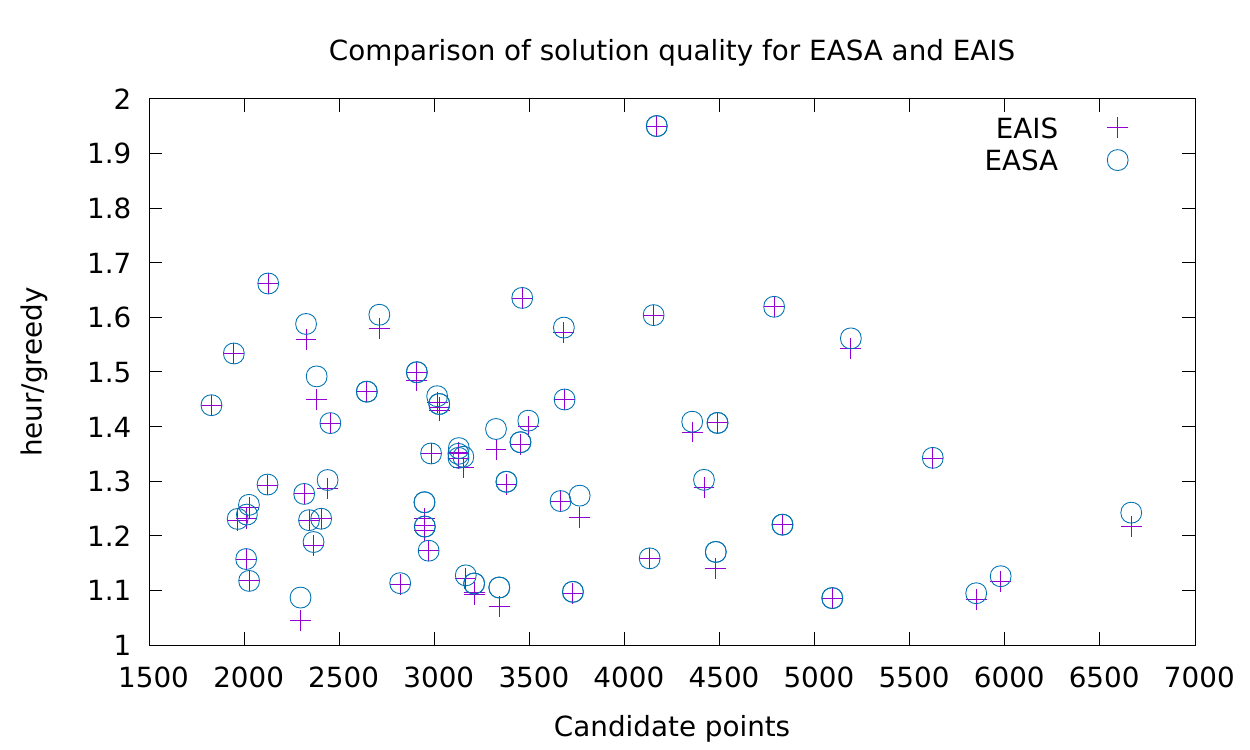}
    \subcaption{Solution quality for Evolutionary Algorithms}\label{fig:EvoCompSol}
  \end{subfigure}
  \begin{subfigure}[t]{0.49\textwidth}
    \includegraphics[width=\textwidth]{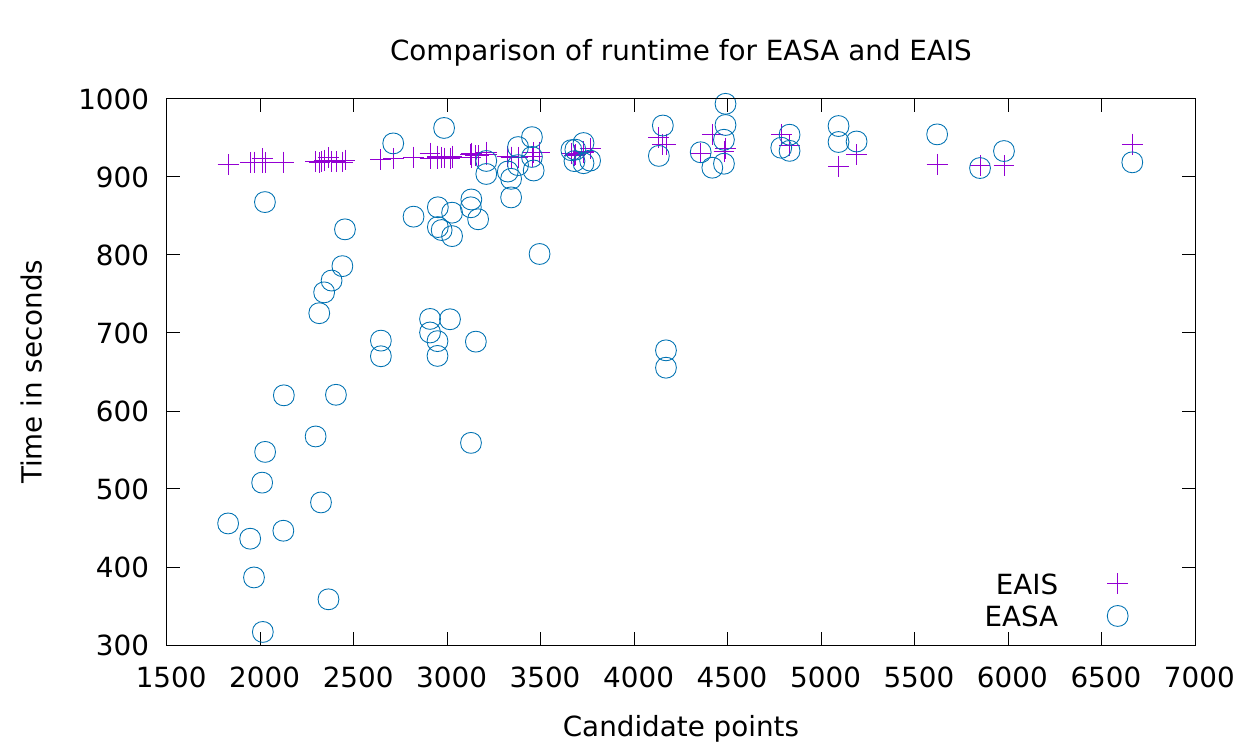}
    \subcaption{Runtimes for Evolutionary Algorithms}\label{fig:EvoCompTime}
  \end{subfigure}
  \caption[Comparison Evolutionary Algorithms]{Comparison of solution quality and runtime by Evolutionary Algorithm with Iterated Local Search and with Simulated Annealing.
  } 
  \label{fig:EvoComp}
\end{figure}

Figure~\ref{fig:EvoComp} shows the experimental comparison of both mutation variants.
One can see that EASA performs slightly better and is significantly faster for smaller instances.
This implies that EASA often quickly finds a good solution but is usually not able to improve it further and terminates early.
EAIS, on the other hand, is able to improve its quality until the time limit but still remains slightly worse.
For the further experiments, we settled on EASA.

\subsubsection{Comparison of Heuristics with IP as Baseline}
\label{IPvsHeur}
We compared the heuristics in terms of solution quality and runtime
against the IP solver, which produces not only solutions, but also guaranteed
bounds.

\begin{figure}
  \centering
  \begin{subfigure}[t]{0.49\textwidth}
    \includegraphics[width=\textwidth]{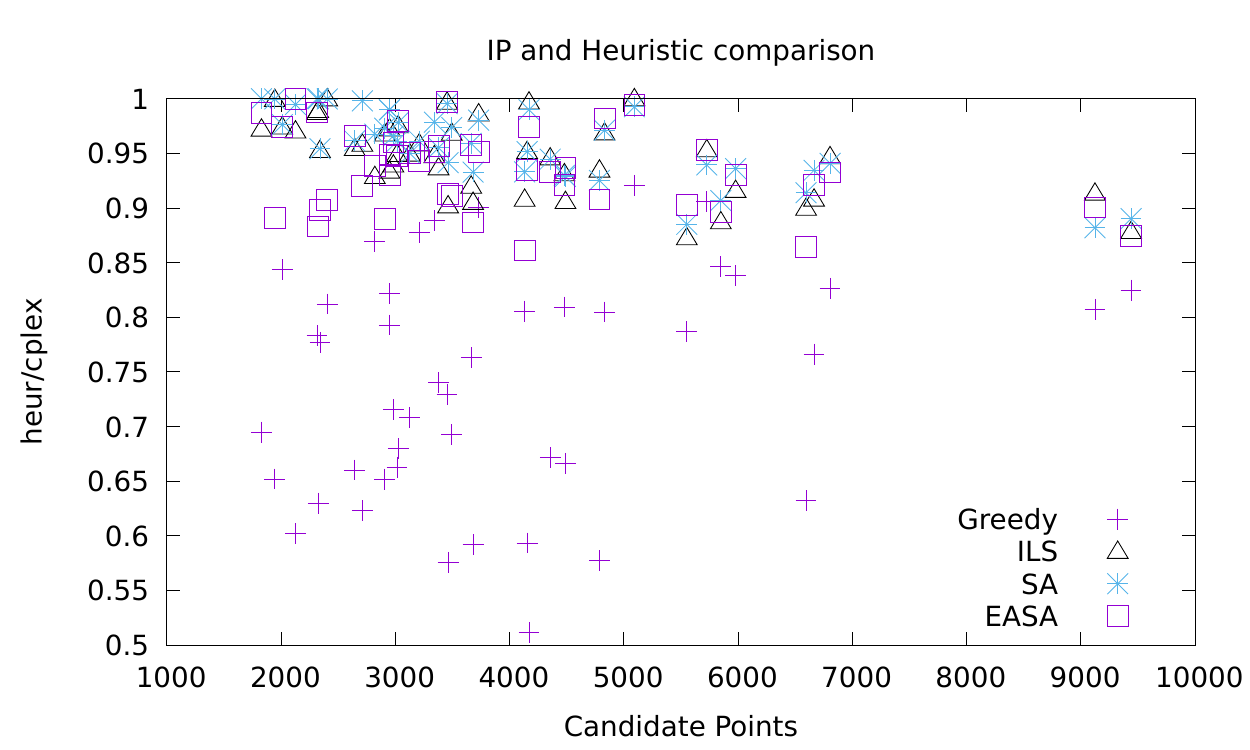}
    \subcaption{Solution quality for heuristics and IP solutions.}\label{fig:HeurCompSolAll}
  \end{subfigure}
  \begin{subfigure}[t]{0.49\textwidth}
    \includegraphics[width=\textwidth]{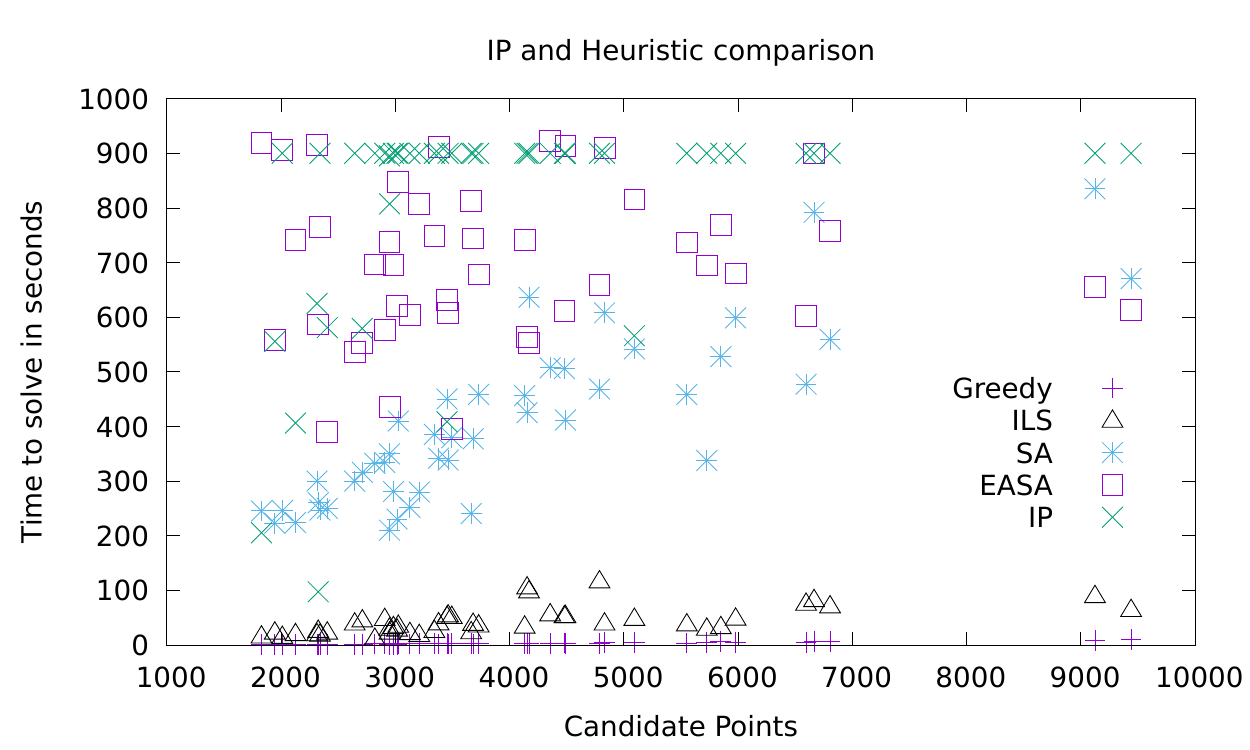}
    \subcaption{Runtimes of heuristics and IP solutions.}\label{fig:HeurCompTimeAll}
  \end{subfigure}
  \caption[Comparison all tested algorithms]{Comparisons of solution quality and runtime for all tested algorithms. 
}\label{fig:HeurCompAll}
\end{figure}

Figure~\ref{fig:HeurCompAll} shows the obtained results.
The Evolutionary Algorithm produces, on average, the worst solutions
of all metaheuristics, while still requiring more time than the others.  
However, it is the only metaheuristic tested that reliably computes good
solutions for instances consisting of several point clusters, for which solutions consist of several connected components.
These instances did not occur with our generation method but only in separate, manually created instances which are not part of this experiment.

The greedy approach never falls 
below $\frac{1}{2}$OPT for these instances, while being the fastest.

Iterated Local Search (ILS) appears to produce quite good solutions, which are not worse than 
10\% below the optimal solution, in a very short time frame.  While it only finds 
a local optimum, it seems that the objective function quality of these 
are quite close to the optimal values. As a consequence,
Iterated Local Search can produce good solutions for instances 
with up to $5000$ candidate points and $k = 25$.

The best heuristic algorithm, in terms of solution quality and runtime, appears to be 
Simulated Annealing.  The combination of fast diversification at high temperature
and random swaps for improving solutions at low temperature seems to work quite well;
in addition, the mechanism of 
reheating to restart diversification, in combination with
multi-threading to cover a larger search space, 
are characteristics that are not present in the Evolutionary Algorithm.
For $k = 25$, Simulated Annealing produces excellent solutions
for instances with up to $5000$ candidate points; this instance size
can be easily increased for smaller~$k$.

In summary, we can produce excellent heuristic solutions
for instances with up to $5000$ candidate points and $k = 25$.
If fast solutions are desired, Iterated Local Search is the method of choice.
The best tradeofff between runtime and solution quality is offered
by Simulated Annealing.
Finally, for cluster-based instances, we recommend Evolutionary Algorithms.

\subsubsection{Linear Programming and Integrality Gap}

As described in Section~\ref{sec:theory}, if the TCP input trajectories come from $K$ subsets of noncrossing trajectories, we have a $K$-approximation algorithm based on dynamic programming. In particular, if the input trajectories consist of axis-parallel line segments, $K=2$, so there is a 2-approximation.
This may coincide with better practical solvability
of these kinds of instances. We have verified this
for some instances for which all segments are axis-parallel
and (for collinear segments) non-overlapping. 
(See Figure~\ref{fig:huge} for such an instance with 1100 segments and
roughly 8200-8500 candidate points. We have also solved instances with 2000 segments and 19,000 candidate points.)
Furthermore, for instances with up to 20,000
points, the integrality gap was never larger than 20\% for $k=5$, 
7\% for $k=10$, 5\% for $k=15$ and less than 2.5\% for larger $k$.
See Figures~\ref{fig:TANOC5_large}--\ref{fig:TANOC25_large} in the Appendix.

\old{
\begin{figure}
  \centering
  \begin{subfigure}[t]{0.49\textwidth}
    \includegraphics[width=\textwidth]{./images/new_plot_figs/TANOC_IPTime_K5}
    \subcaption{IP runtime; $k=5$.}\label{fig:TANOC_time5}
  \end{subfigure}
  \begin{subfigure}[t]{0.49\textwidth}
    \includegraphics[width=\textwidth]{./images/new_plot_figs/TANOC_LpToIpComp_K5}
    \subcaption{Integrality gap; $k=5$.}\label{fig:TANOC_gap5}
  \end{subfigure}
  \caption[Comparison solved algorithms]{IP runtime and integrality gap for axis-parallel instances and $k=5$.
  See Figure~\ref{fig:TANOC5_large} for larger images.}\label{fig:TANOC5}
\end{figure}

\begin{figure}
  \centering
  \begin{subfigure}[t]{0.49\textwidth}
    \includegraphics[width=\textwidth]{./images/new_plot_figs/TANOC_IPTime_K10}
    \subcaption{IP runtime; $k=10$.}\label{fig:TANOC_time10}
  \end{subfigure}
  \begin{subfigure}[t]{0.49\textwidth}
    \includegraphics[width=\textwidth]{./images/new_plot_figs/TANOC_LpToIpComp_K10}
    \subcaption{Integrality gap; $k=10$.}\label{fig:TANOC_gap10}
  \end{subfigure}
  \caption[Comparison solved algorithms]{IP runtime and integrality gap for axis-parallel instances and $k=10$.
  See Figure~\ref{fig:TANOC10_large} for larger images.}\label{fig:TANOC10}
\end{figure}

\begin{figure}
  \centering
  \begin{subfigure}[t]{0.49\textwidth}
    \includegraphics[width=\textwidth]{./images/new_plot_figs/TANOC_IPTime_K15}
    \subcaption{IP runtime; $k=15$.}\label{fig:TANOC_time15}
  \end{subfigure}
  \begin{subfigure}[t]{0.49\textwidth}
    \includegraphics[width=\textwidth]{./images/new_plot_figs/TANOC_LpToIpComp_K15}
    \subcaption{Integrality gap; $k=15$.}\label{fig:TANOC_gap15}
  \end{subfigure}
  \caption[Comparison solved algorithms]{IP runtime and integrality gap for axis-parallel instances and $k=15$.
  See Figure~\ref{fig:TANOC15_large} for larger images.}\label{fig:TANOC15}
\end{figure}

\begin{figure}
  \centering
  \begin{subfigure}[t]{0.49\textwidth}
    \includegraphics[width=\textwidth]{./images/new_plot_figs/TANOC_IPTime_K20}
    \subcaption{IP runtime; $k=20$.}\label{fig:TANOC_time20}
  \end{subfigure}
  \begin{subfigure}[t]{0.49\textwidth}
    \includegraphics[width=\textwidth]{./images/new_plot_figs/TANOC_LpToIpComp_K20}
    \subcaption{Integrality gap; $k=20$.}\label{fig:TANOC_gap20}
  \end{subfigure}
  \caption[Comparison solved algorithms]{IP runtime and integrality gap for axis-parallel instances and $k=20$.
  See Figure~\ref{fig:TANOC20_large} for larger images.}\label{fig:TANOC20}
\end{figure}

\begin{figure}
  \centering
  \begin{subfigure}[t]{0.49\textwidth}
    \includegraphics[width=\textwidth]{./images/new_plot_figs/TANOC_IPTime_K25}
    \subcaption{IP runtime; $k=25$.}\label{fig:TANOC_time25}
  \end{subfigure}
  \begin{subfigure}[t]{0.49\textwidth}
    \includegraphics[width=\textwidth]{./images/new_plot_figs/TANOC_LpToIpComp_K25}
    \subcaption{Integrality gap; $k=25$.}\label{fig:TANOC_gap25}
  \end{subfigure}
  \caption[Comparison solved algorithms]{IP runtime and integrality gap for axis-parallel instances and $k=25$.
  See Figure~\ref{fig:TANOC25_large} for larger images.}\label{fig:TANOC25}
\end{figure}
}

\subsubsection{Application to Taxi Trajectory Data}  

We have applied our TCP model to solve real-world data sets to optimality.
In Figure~\ref{fig:taxi_opt} we show the results of computing
$k=5$ optimal portals for a set of trajectories based on taxi cab
routes in the San Francisco Bay Area. The data is based on 375
vehicles, sampled every 5 minutes, 288 times per day, for one
week~\cite{epfl-mobility-20090224}; see the trajectories in Figure~\ref{fig:sanfran}.

Our experiments included runs on 30 instances, with $k$ ranging from 5
to 11, on sets of 10 to 120 trajectories of varying lengths (comprised
of 1300 to 3700 edges, and 600 to 1800 vertices).  The trajectories
are snapped to a regular grid graph.  Solution times of the IP were up to 200
seconds of computation, with most instances taking less than 10
seconds.

\begin{figure}[h]
  \centering
  \includegraphics[width=0.4438\textwidth]{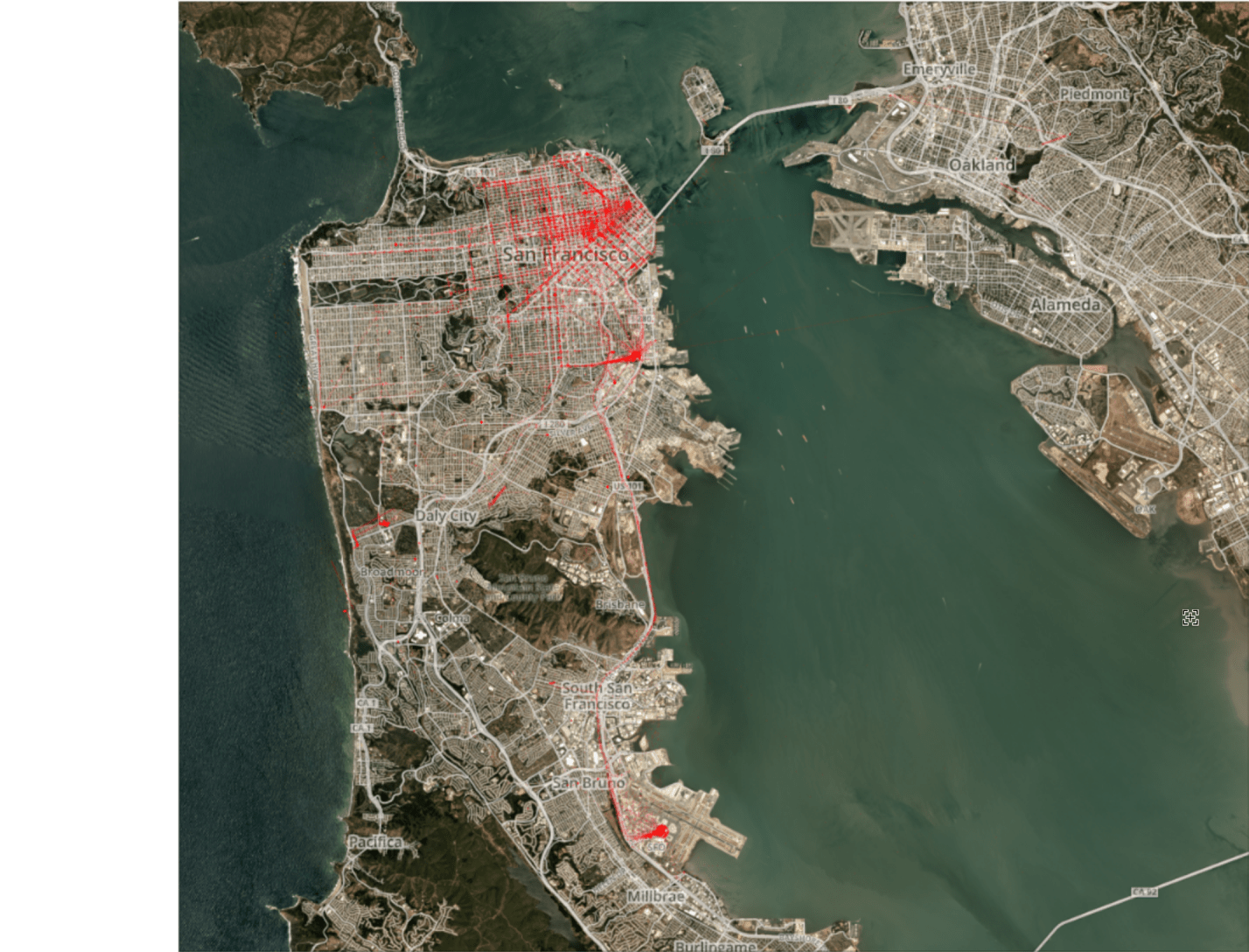} 
\hfill
  \includegraphics[width = 0.36\textwidth]{./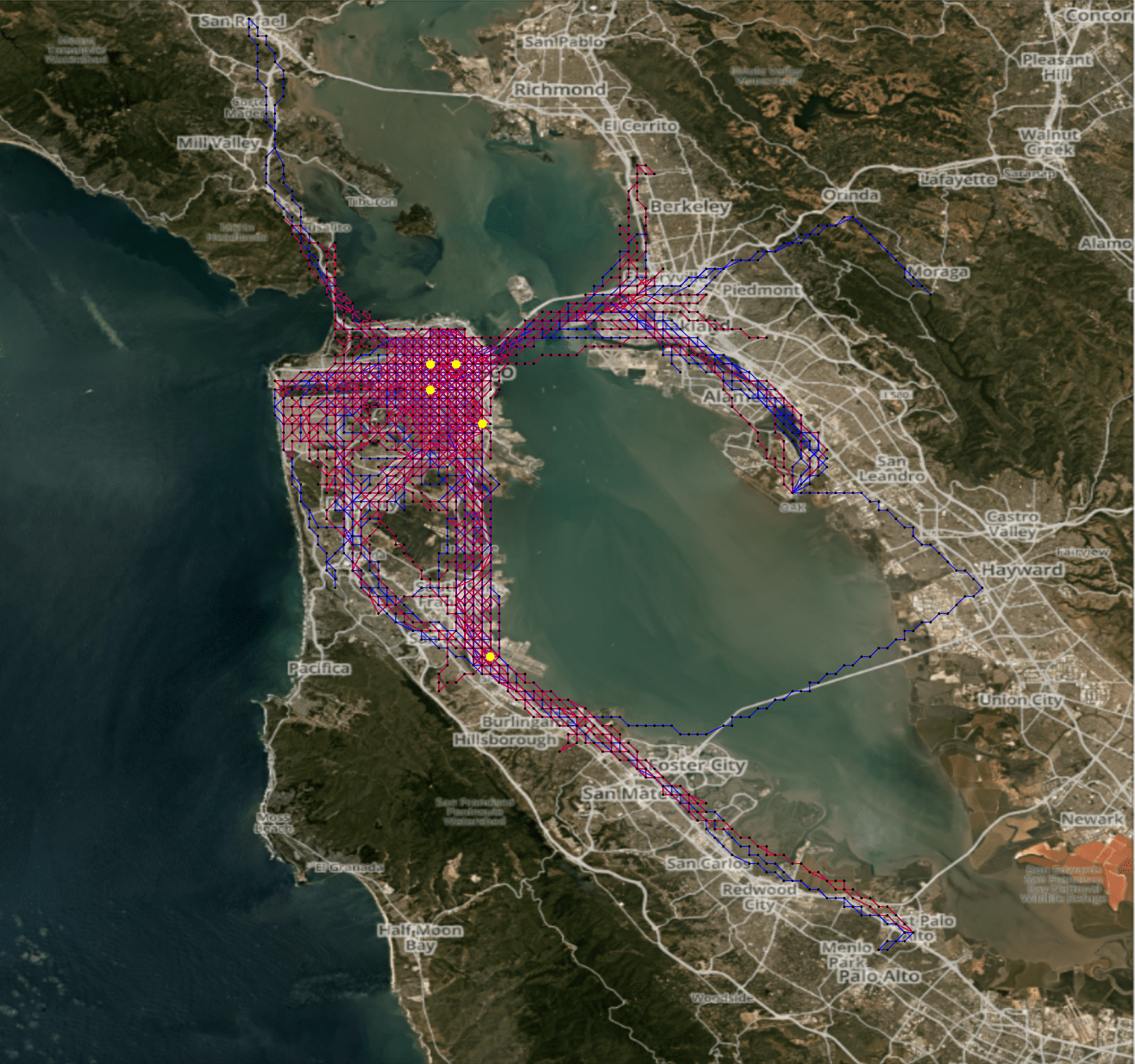}	 
\hfill
  \caption{(Left) Candidate points before processing. (Right) Solution to real-world TCP instance: An optimal set of $k=5$ portals are highlighted. Red trajectory portions (some of which may loop back) are captured, blue ones are not captured.
Satellite images are courtesy of Planet Labs Inc.}
  \label{fig:taxi_opt}
\vspace*{-0.3cm}
\end{figure}


\section{Conclusion}

We have introduced the trajectory capture problem (TCP), an
optimization problem in which we seek to place $k$ points (or portals)
in order to ``capture'' the maximum total length of a given set of
paths/trajectories between two placed points. We have shown that the
problem is NP-hard, even for axis-aligned line segment trajectories in
the plane, and we have given approximation algorithms for two cases.
Can we improve the approximation factor of $K$ for a set of trajectories
that is the union of $K$ subsets, each of which is noncrossing?
Can we improve the approximation factor of $\Delta$ for a set of trajectories
of depth at most $\Delta$?

A focus of our work is the exploration, via algorithm
engineering, of practical methods for solving the TCP. Our methods are
based on integer programming and on simple heuristic search methods.
It will be interesting to develop more specified methods for other,
specific classes of instances, such as 
further geometric instances arising from other types of real-world geographic data.

\section*{Acknowledgments}
Work by Tyler Mayer was mostly carried out while at Stony Brook University.
Joe Mitchell and Tyler Mayer were partially supported by the National Science
Foundation (CCF-1526406) and a grant from the US-Israel Binational Science
Foundation (BSF project 2016116). Joe Mitchell was also partially supported by
the DARPA Lagrange program. Sandia National Laboratories is a multimission
laboratory managed and operated by National Technology and Engineering
Solutions of Sandia, LLC, a wholly owned subsidiary of Honeywell International,
Inc., for the U.S. Department of Energy's National Nuclear Security
Administration under contract DE-NA-0003525.


\bibliography{refs}

\setcounter{section}{0}
\renewcommand\thesection{\Alph{section}}

\section{Appendix}
\label{app:eval}

\subsection{Proof of Theorem~\ref{thm:general-hard}}

\begin{proof}
  The reduction is from the {\sc Hitting Lines} problem: Decide if
  there exists a set of $k$ points that hit every one of a given set
  of $n$ lines in the plane.

  Consider an instance, ${\cal L}=\{\ell_1,\ldots,\ell_n\}$, of $n$
  lines in the plane.  Let $\delta>0$ be the radius of a disk,
  $D_\delta$, centered at the origin $(0,0)$, that is large enough to
  contain all intersection points (crossing points) in the arrangement
  of ${\cal L}$; it suffices to select $\delta$ greater than the
  Euclidean distance from the origin to any crossing point, $\ell_i
  \cap \ell_j$.  Let $R>>\delta$ and consider the much larger disk,
  $D_{R+\delta}$, centered at the origin; it will suffice to pick $R=
  2n\delta$.  Each line $\ell_i$ intersects $D_\delta$ in a single
  segment, $s_i$, and intersects the annulus $D_{R+\delta}\setminus D_\delta$
  in two segments, $s'_i$ and $s''_i$, with $s'_i$ in the halfspace
  $\{(x,y)\in \Re^2: y\leq 0\}$ (and $s''_i$ in the halfspace with
  $y\geq 0$).

  We consider the instance of TCP with the following $2n$ line
  segments as input: the $n$ segments, $s_i\cup s'_i$, obtained by
  concatenating $s_i$ and $s'_i$, and the $n$ segments $\sigma_i$ that
  connect the endpoint of $s'_i$ (the one at distance $R+\delta$ from
  the origin) to a point $p$, at distance $R'+R>>R$ from the origin;
  it will suffice to pick $R'=2nR$.  Refer to
  Figure~\ref{fig:n-orientation-hard}.
  
\begin{figure}[htb]
\centering
\includegraphics[width = 0.4\textwidth]{./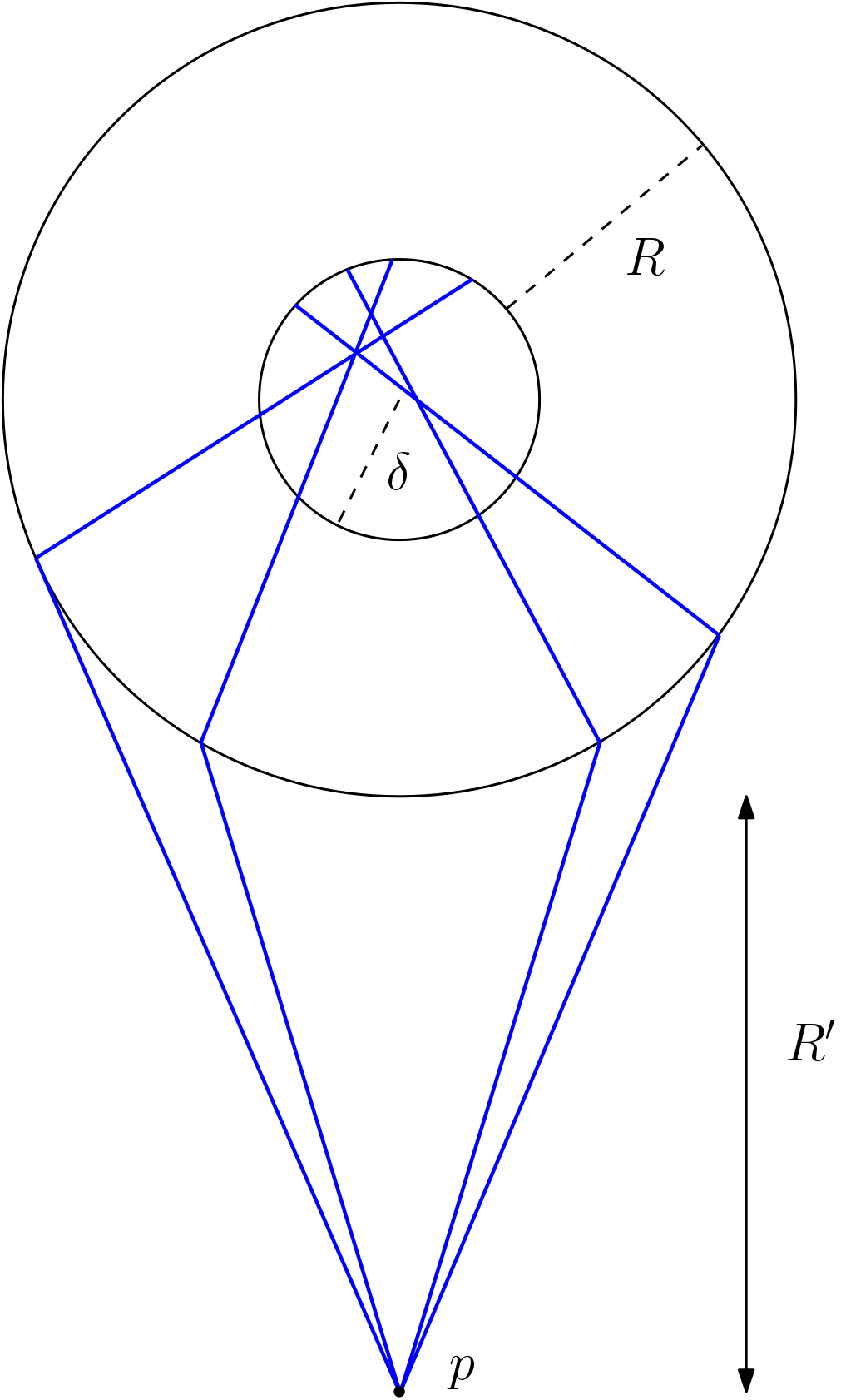}
\caption{Reduction from the problem {\sc Hitting Lines}.}
\label{fig:n-orientation-hard}
\end{figure}

  We claim that there is a hitting set of size $k$ for ${\cal L}$ if
  and only if it is possible to capture at least length $\sum_i
  (|s'_i| + |\sigma_i|)$ using a budget of $n + 1 + k$ points in the
  instance of TCP.

  First, if there exists a hitting set of size $k$ for ${\cal L}$,
  then in the instance of TCP, we place $k$ hit points (within
  $D_\delta$) to hit all of the segments $s_i\cup s'_i$, and place $n$
  hit points at the endpoints of these $n$ segments that lie on the
  boundary of $D_{R+\delta}$, and one point at $p$; this suffices to
  capture at least length $\sum_i (|s'_i| + |\sigma_i|)$.

  For the converse, the only way to capture length at least $\sum_i
  (|s'_i| + |\sigma_i|)$ is to place hit points at $p$, at the $n$
  points $\sigma_i\cap s'_i$, and at $k$ points within $D_\delta$ that
  hit all $n$ of the segments $s'_i$.  (Lengths captured within
  $D_\delta$ are very small compared to the lengths captured outside
  of $D_\delta$.)
\end{proof}

\subsection{Proof of Theorem~\ref{thm:2orientation-hard}}

\begin{proof}
  Without loss of generality, we assume that the two orientations are
  horizontal and vertical; i.e., the segments of ${\cal T}$ are
  axis-parallel.
  
The reduction is from 3-SAT, in which one is to decide if there is a
truth assignment for $n$ Boolean variables $x_i : 1 \leq i \leq n$
that satisfy a set of $m$ clauses in conjunctive normal form.  The
construction is similar to that used to show NP-hardness of the
hitting set problem on axis-parallel
segments~\cite{fekete2018geometric}, with some key new features.
See Figure~\ref{fig:THM3} for the overall construction.

\begin{figure}[htbp]
\centering
\includegraphics[width=0.6\textwidth]{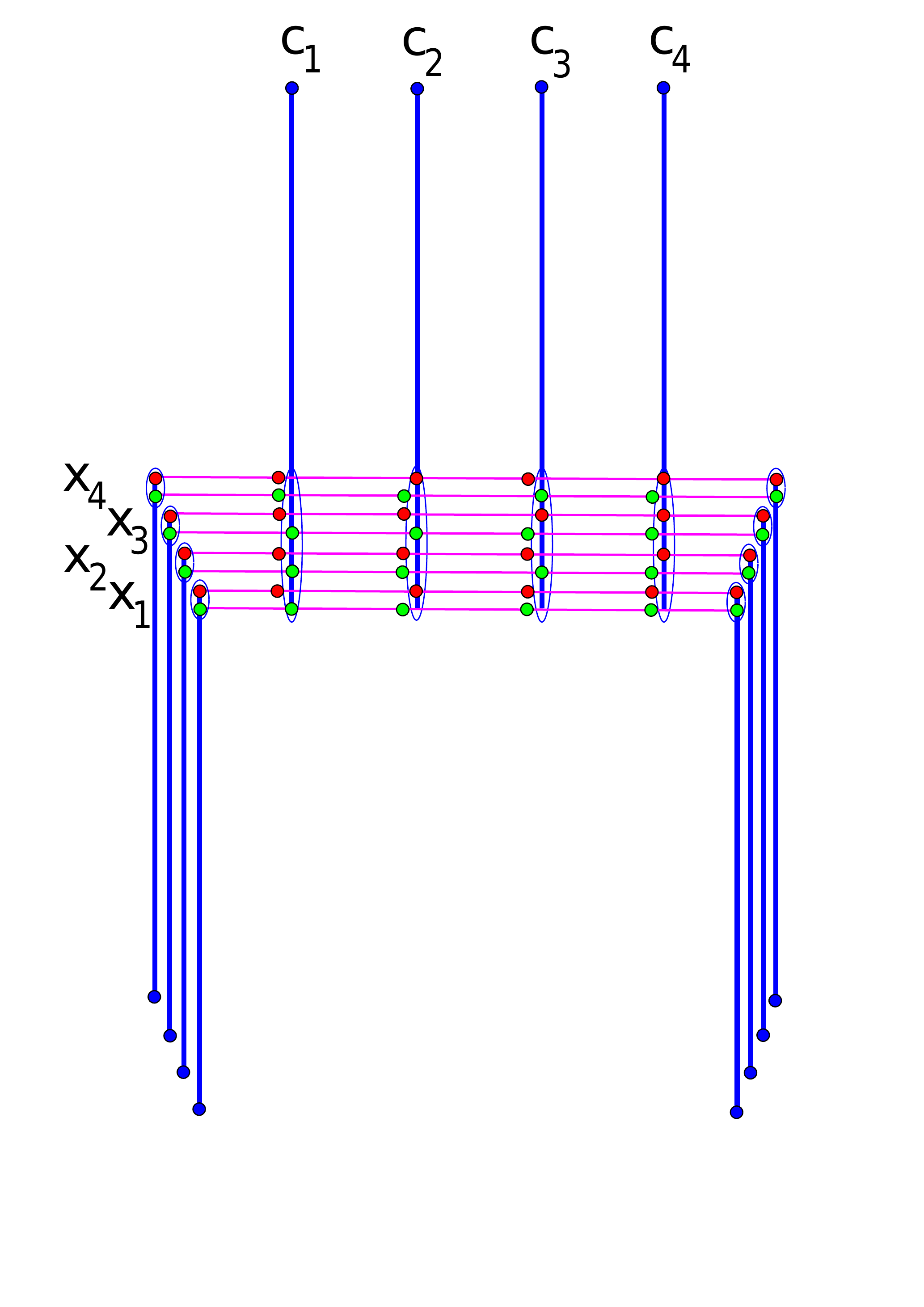}
\caption{Overview of the hardness construction, for the 3SAT instance
$(x_1\vee x_2 \vee x_3)\wedge(\overline{x}_1\vee x_3\vee\overline{x}_4)
\wedge(x_2\vee\overline{x}_3\vee x_4)\wedge (\overline{x}_2\vee \overline{x}_3\vee\overline{x}_4)$.
For an instance with $m$ clauses and $n$ variables, it consists of $2n+m$ vertical (blue) segments,
each of length $nm$, and $2n(m+1)$ horizontal (purple) segments of unit length (up to small modifications of 
size $O(\varepsilon)$). The red and green dots indicate points at which different
segments meet. With $4n+m+nm$ portals, a total distance of length at least $n(m+1)+2n^2m+nm^2-\Theta(nm\varepsilon)$
can be captured if and only if the 3SAT instance has a satisfying truth instance.
  }
\label{fig:THM3}
\end{figure}

First, we place $m$ vertical \emph{clause} segments, each of length $nm$,
evenly spaced at unit distance; these are shown in blue at the top 
of Figure~\ref{fig:THM3}. We also place $2n$ vertical
\emph{variable} 
segments of length $nm$ (also shown in blue) at unit distance to the right and left of the clause
segments, such that their top vertices line up with the bottom vertices
of the clause segments; these variable segments are slightly shifted vertically and horizontally
by $\Theta(\varepsilon)$ for a sufficiently small $\varepsilon$ (it suffices to set $\varepsilon=\Theta(1/mn)$).

For each of the $n$ variables, we add a set of 2 \emph{literal} horizontal chains of (approximately) unit segments
(shown in purple in the figure); each chain consists of $m+1$ such segments,
and two consecutive segments in the same chain intersect only in their shared endpoint;
these are shown by red and green dots in the figure.
Similarly, the first and last endpoint of such chains of horizontal segments lie on the
corresponding left and right vertical variable segments. The vertical distance between
the two chains is $\Theta(\varepsilon)$.
For each variable, the lower chain corresponds to a {\tt true} 
assignment (corresponding to green dots), while the upper chain corresponds to a 
{\tt false} assignment of that variable (corresponding to red dots).

Finally, the lengths of the horizontal segments for a literal are adjusted by $\Theta(\varepsilon)$,
so that precisely the dots corresponding to the literals in a particular clause are
positioned on the vertical segments for that clause.

Now we claim: With $4n+m+nm$ portals, a total distance of length at least $n(m+1)+2n^2m+nm^2-\frac{1}{2}$
can be captured, if and only if the 3SAT instance has a satisfying truth instance.

It is straightforward to see that a satisfying truth assignment induces a set of 
$4n+m+nm$ portals that capture a total distance of length at least $n(m+1)+2n^2m+nm^2-\Theta(nm\varepsilon)$:
Simply pick the upper endpoints of vertical clause segments and the bottom endpoints of vertical
variable segments, along with all of the endpoints of horizontal literal segments for the chain
corresponding to the truth assignments of a variable. This captures the full distance of 
$m+1-\Theta(n\varepsilon)$ of one horizontal chain of literal segments for each of the $n$ variables,
$nm-\Theta(n\varepsilon)$ of each of the $2n$ vertical variable segments, as well as 
$nm-\Theta(n\varepsilon)$ of each of the $m$ vertical clause segments. 

For the converse, first observe that any solution satisfying the bound must capture the full length of all
vertical segments, up to a total difference of at most $\frac{1}{2}$. As a consequence, we can assume
that $2n+m$ portals must be placed at endpoints of the vertical segments, as indicated by the 
$2n+m$ blue dots in the figure. Furthermore, each vertical segment must have a second portal 
near its other endpoint; without decreasing the value of the solution
by more than $\Theta(n\varepsilon)$, we can assume that each such portal is placed 
on one of the endpoints of a horizontal literal segment. This captures a total distance
of $2n^2m+nm^2-\Theta(nm\varepsilon)$ from the vertical segments.

On the other hand, this leaves at most $2n+nm=n(m+2)$ (non-blue) portals to 
capture a distance of at least $n(m+1)-\frac{1}{2}$ from the horizontal unit-length
segments. If we consider the auxiliary graph in which these portals
are represented by vertices, and two vertices are adjacent if they 
capture the same horizontal edge, we conclude that this graph decomposes
into a set of paths; furthermore, a path consisting of $p+1$ vertices
and $p$ edges captures a distance of at most $p$. As a consequence,
there can be at most $n$ such paths, otherwise we capture
a horizontal distance of at most $n(m+1)-1$. Because the longest possible length
of a path is $m+1$, which occurs if and only if we pick all endpoints
in a full horizontal literal chain, it follows that we must pick
precisely $n$ such full chains, with either of their two endpoints on 
the two vertical segments for the corresponding variable. As this leaves
$2n$ portals to be positioned near the upper endpoints of the $2n$ vertical variable
segments, and we need at least one portal on each of the vertical variable segments,
we conclude that for each variable we must pick precisely one literal chain of
endpoints, corresponding to a truth assignment. Finally, the choice of portals
must also pick at least one portal near the bottom end of each vertical variable segment;
this is only possible if the choice of literal chains produces at least
one satisfying literal for each clause. This concludes the claim.
\end{proof}

\subsection{Proof of Theorem~\ref{thm:gap}}

\begin{proof}
Consider the family of examples that arise from the arrangement of segments corresponding to the embedding of the complete graph, $K_n$, with its $n$ nodes embedded as evenly spaced points, $U$, on the boundary of a circle of diameter 1; see Figure~\ref{fig:intProgCircleGadget}.

The corresponding arrangement graph ${\cal G}$ has vertices at the $n$ points $U$, as well as at the vertices (crossing points) in the arrangement; edges of ${\cal G}$ connect consecutive vertices along the segments.
The set ${\cal T}$ includes a trajectory (of collinear edges) along each of the ${n \choose 2}$ line segments.

\begin{figure}
\centering
\includegraphics[scale=0.8]{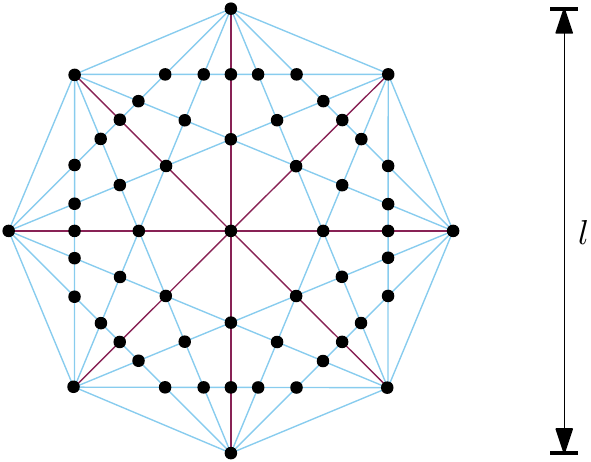}
\caption{Example used in the proof of Theorem~\ref{thm:gap}.}
\label{fig:intProgCircleGadget}
\end{figure}

For $k=1$, the integrality gap is infinite, because the integral solution
cannot capture anything, while the fractional solution can capture the edges
fractionally.  For $k>1$, the optimal solution can capture less than $k^2$
trajectories, with a maximal trajectory length of $1$, so $OPT_{INT}(k)\leq
k^2$.
The fractional solution, however, can include fractional portals, with variable values $k/n$, at each of the $n$ points $U$, resulting in each trajectory being captured fractionally with value $k/n$.
For simplicity and without loss of generality, assume that $n$ is a multiple of $4$.
Then, each point $U$ has at least $n/2$ incident segments with length at least $0.5$.
  To see this, simply look onto the left most vertex: Each point of $U$ on the right half has a distance of at least $0.5$ (in fact, at least $\sqrt{0.5}$).
This means that there are at least $n\cdot \frac{n}{4}$ trajectories with length of at least $0.5$, so the overall length of all trajectories is at least $\frac{n^2}{8}$.
These trajectories are all captured with fraction at least $1/k$, so $OPT_{FRAC}(k)\geq \frac{1}{k}\cdot \frac{n^2}{8}$.
With 
\[\frac{OPT_{FRAC}(k)}{OPT_{INT}(k)} \geq \frac{\frac{1}{k}\cdot \frac{n^2}{8}}{k^2} \geq \frac{n^2}{8k^3},\]
we obtain an arbitrarily large integrality gap, as $n$ increases (for any fixed $k$).
\end{proof}
  
  \subsection{Local vs.~Global Neighborhoods}
  \label{sec:local-global}

A test for the comparison between Integer Linear Programming and
heuristic approaches in Section~\ref{IPvsHeur} focuses on choosing
good neighborhoods for the meta-heuristics.  To this end, we compared
the results of local vs.\ global neighborhoods with respect to
quality and time.

\begin{figure}
\centering
\begin{subfigure}[t]{0.49\textwidth}
\includegraphics[width=\textwidth]{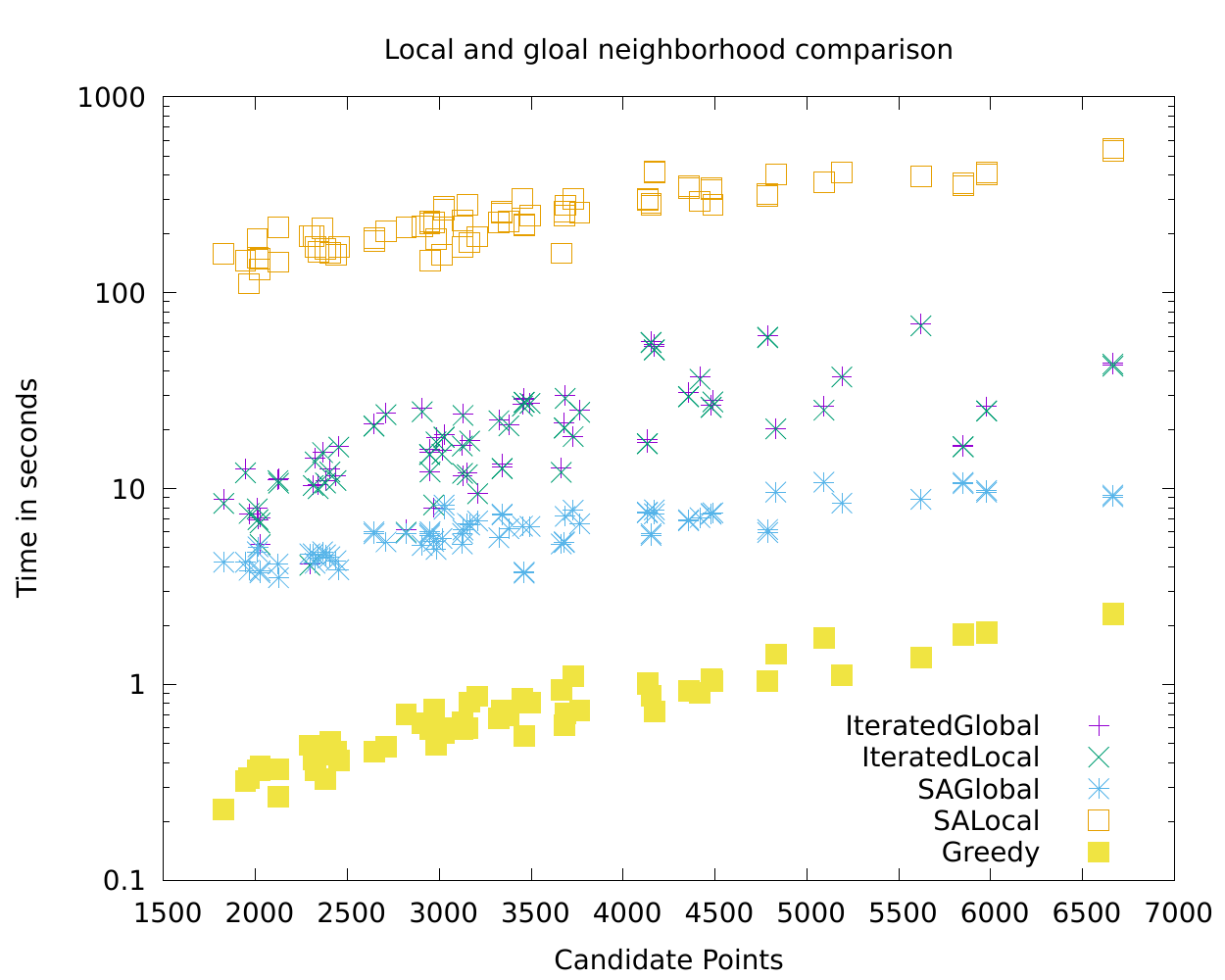}
\subcaption{Comparison of required time}
\label{fig:NeighCompTime}
\end{subfigure}
\begin{subfigure}[t]{0.49\textwidth}
\includegraphics[width=\textwidth]{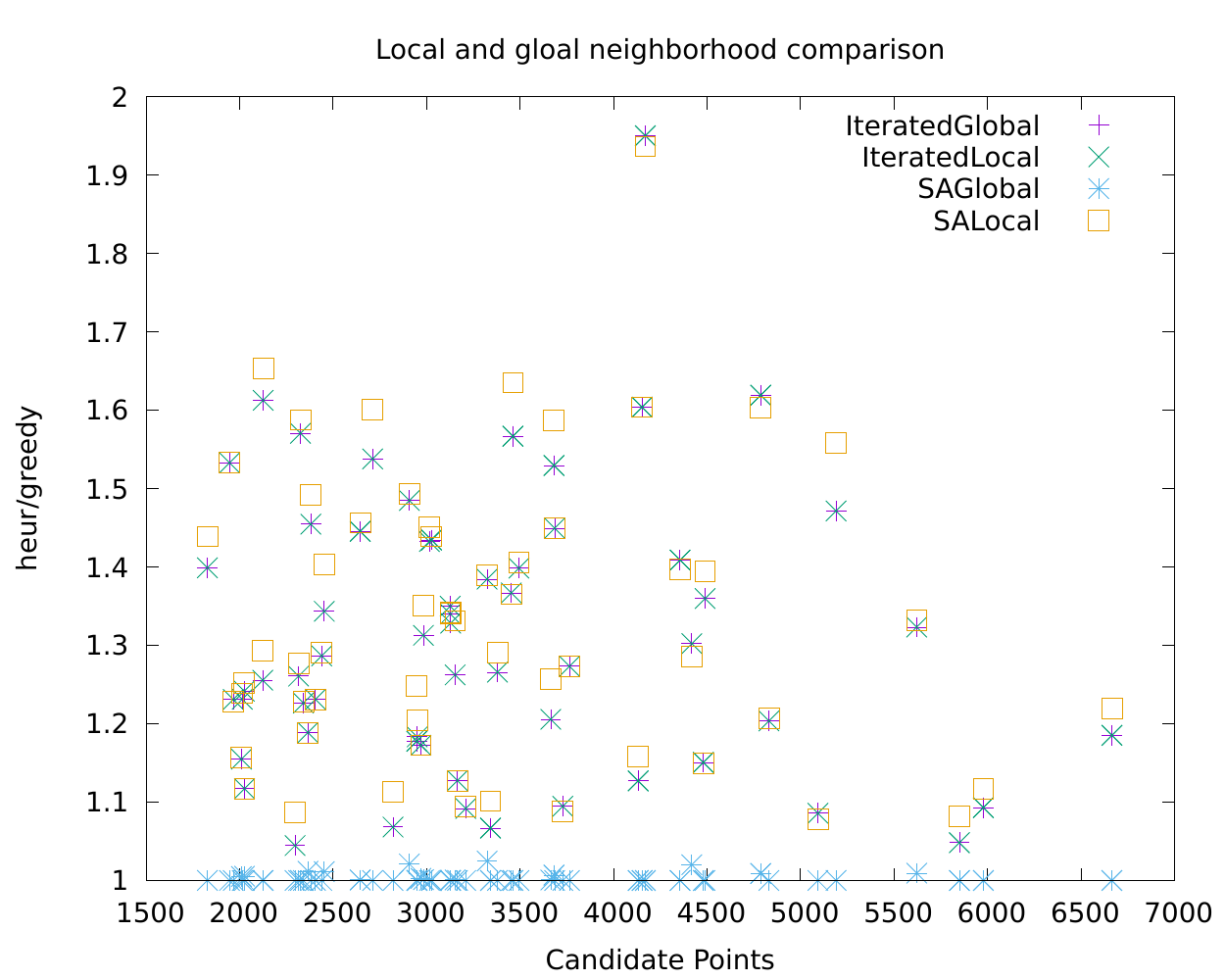}
\subcaption{Comparison of achieved objective values and greedy solutions}
\label{fig:NeighCompSol}
\end{subfigure}
\caption[Comparison local and global neighborhood ]{Comparison of solutions for local and global neighborhood with Iterated Local Search and Simulated Annealing for $k = 25$.}
\label{fig:NeighComp}
\end{figure}

Results are shown in Figure~\ref{fig:NeighComp}. 
When using a global neighborhood,
it turns out that in most cases, Simulated Annealing does not find better
solutions than those provided by greedy. This is most likely due to the fact
that the neighborhood space is too large: Looking for a random neighbor will
result 
in a swapped portal that is not connected to another portal
via segments, yielding a worse solution.
Therefore, the objective value is lowered at high temperatures,
while better neighboring solutions will be hard to find at lower temperatures.
This is illustrated in Figure~\ref{fig:NeighCompTime}, where runtimes 
are lower than for any other heuristic, with the exception of greedy.
This is because the algorithm terminates if no improvements of the best 
solution are found for a certain number of iterations.  As a result, global
neighborhoods are not recommended for Simulated Annealing.

On the other hand, the results with local neighborhoods are better with respect
to the objective value. However, Simulated Annealing with local
neighborhood also requires the largest runtimes by a wide margin, 
in particular for large instances,
as shown in Figure~\ref{fig:NeighCompTime}.
This can be explained by the ``reheating'' process after low temperatures:
The algorithm tries to find better solutions than the current solution for a while;
if no better solution is found for a fixed number of iterations,
 it reheats the temperature to escape a local optimum,
allowing worse solutions as the current one. As the process continues
when an improvement is found, the resulting gradual search process
may continue for a long time.

Both the local and the global variants of Iterated Local Search produce
almost always the same solution values, as shown in Figure~\ref{fig:NeighCompSol}.
This is due to the fact that both are looking for the best neighbor in every 
iteration. Because a swapped portal needs to be connected to one or more 
portals via segments, the local neighborhood consists of all vertices that are connected to at least one portal via segments.
Therefore, both search variants produce the same solutions almost every time.
This makes local Iterated Local Search slightly preferable:
As shown in Figure~\ref{fig:NeighCompTime}, it is marginally faster
than global Iterated Local Search by a small margin, due to smaller search space.
(This effect is not more pronounced because the search for a locally best neighbor
seems to be almost as slow as for a globally best neighbor.)

Overall local neighborhoods appear to perform better than global ones,
which is why they were selected for the following tests.

\subsection{An IP Example}
\label{sec:example}
Figure~\ref{fig:segments_to_graph} shows an example for the IP formulated in Section~\ref{sec:IP}; 
using ``$x_i$'' as shorthand for the IP variables $x_{\tau,v_i v_{i+1}}$ that
correspond to the edges along trajectory path
$\tau=(v_0,v_1,\ldots,v_6)$.  Constraints~2 are:
$x_0\leq y_0$, $x_1\leq y_1+x_0$, $x_2\leq y_2+x_1, \ldots, x_5\leq
y_5+x_4$.  Constraints~3 are: $x_5\leq y_6$,
$x_4\leq y_5+x_5$, $x_3\leq y_4+x_4, \ldots,x_0\leq y_1+x_1$.  With
portals selected at $v_1$ and $v_4$ (i.e., with $y_1=y_4=1$, $y_i=0$
otherwise), we get constraints $x_0\leq 0$, $x_5\leq 0$, and $x_4\leq
y_5+x_5=0$, implying that only the subpath $(v_1,v_2,v_3,v_4)$ of
$\tau$ contributes to the objective function.

\begin{figure}[h]
  \centering
  \input{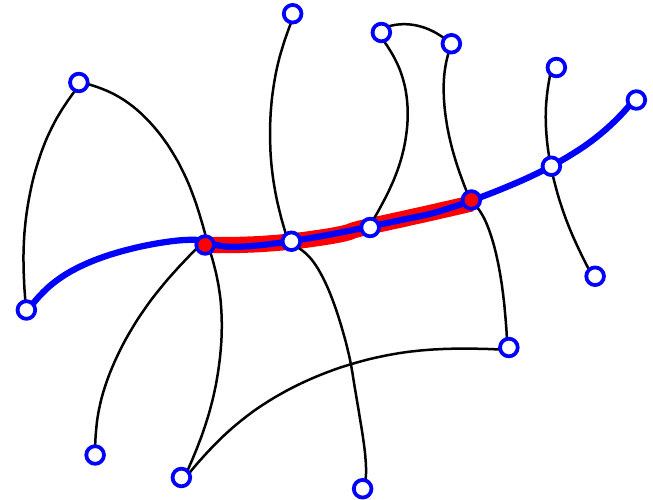_t}
  \caption{Formulating the IP: Trajectory $\tau=(v_0,v_1,\ldots,v_6)$ is highlighted in blue. With selected portals at $v_1$ and $v_4$, the portion highlighted in red is captured.}
  \label{fig:segments_to_graph}
\end{figure}

\subsection{Additional Images and Plots}

Additional images illustrating instances and results from the experimental Section~\ref{sec:practice} are shown
in Figures~\ref{fig:huge}--\ref{fig:TANOC25_large}.

\begin{figure}
\centering
\includegraphics[width = 0.49\textwidth]{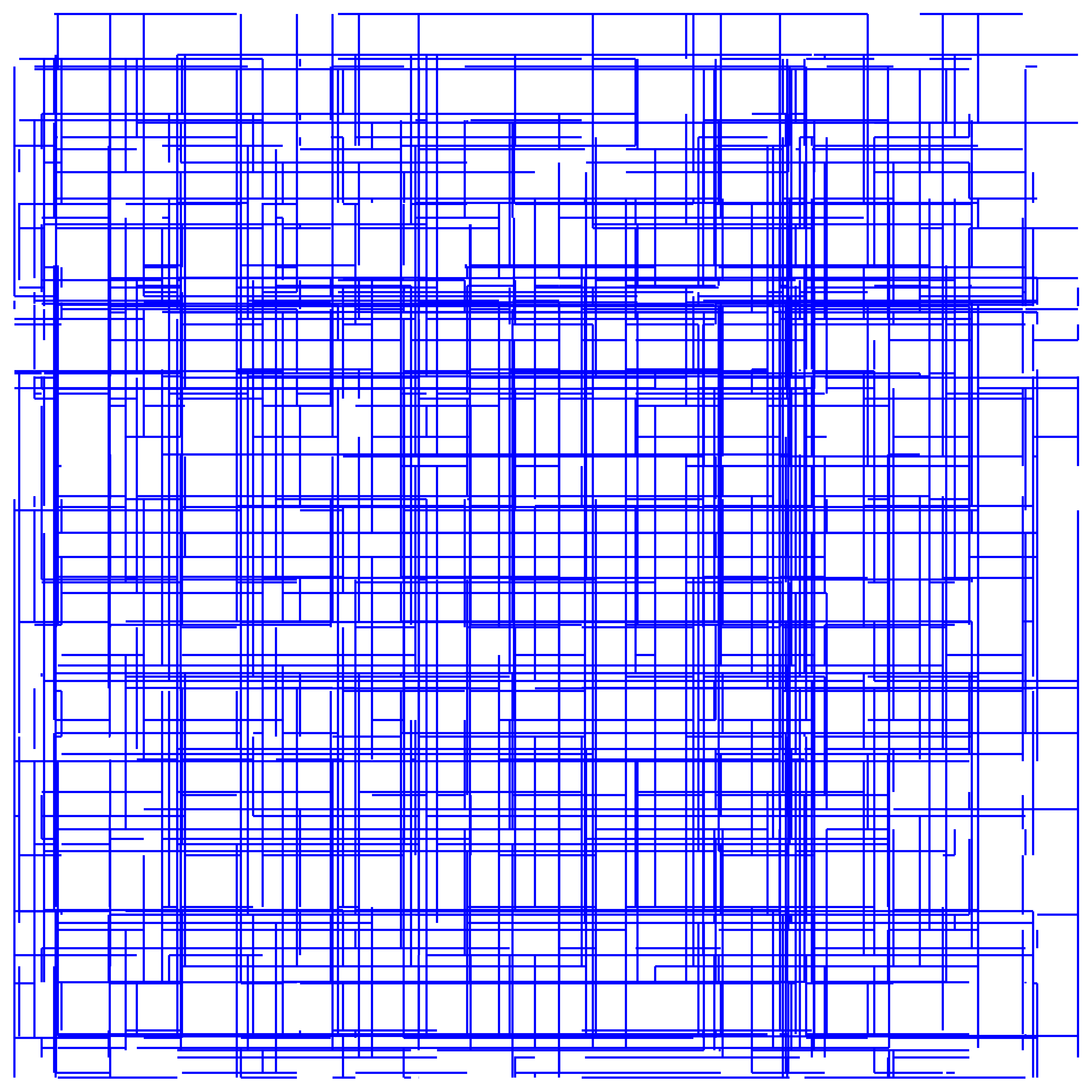}
\caption{An instance with 1100 non-degenerate axis-parallel segments}
\label{fig:huge}
\end{figure}

\begin{figure}[!h]
\centering
	\begin{subfigure}[t]{0.49\textwidth}
	\centering
	\includegraphics[width=\textwidth]{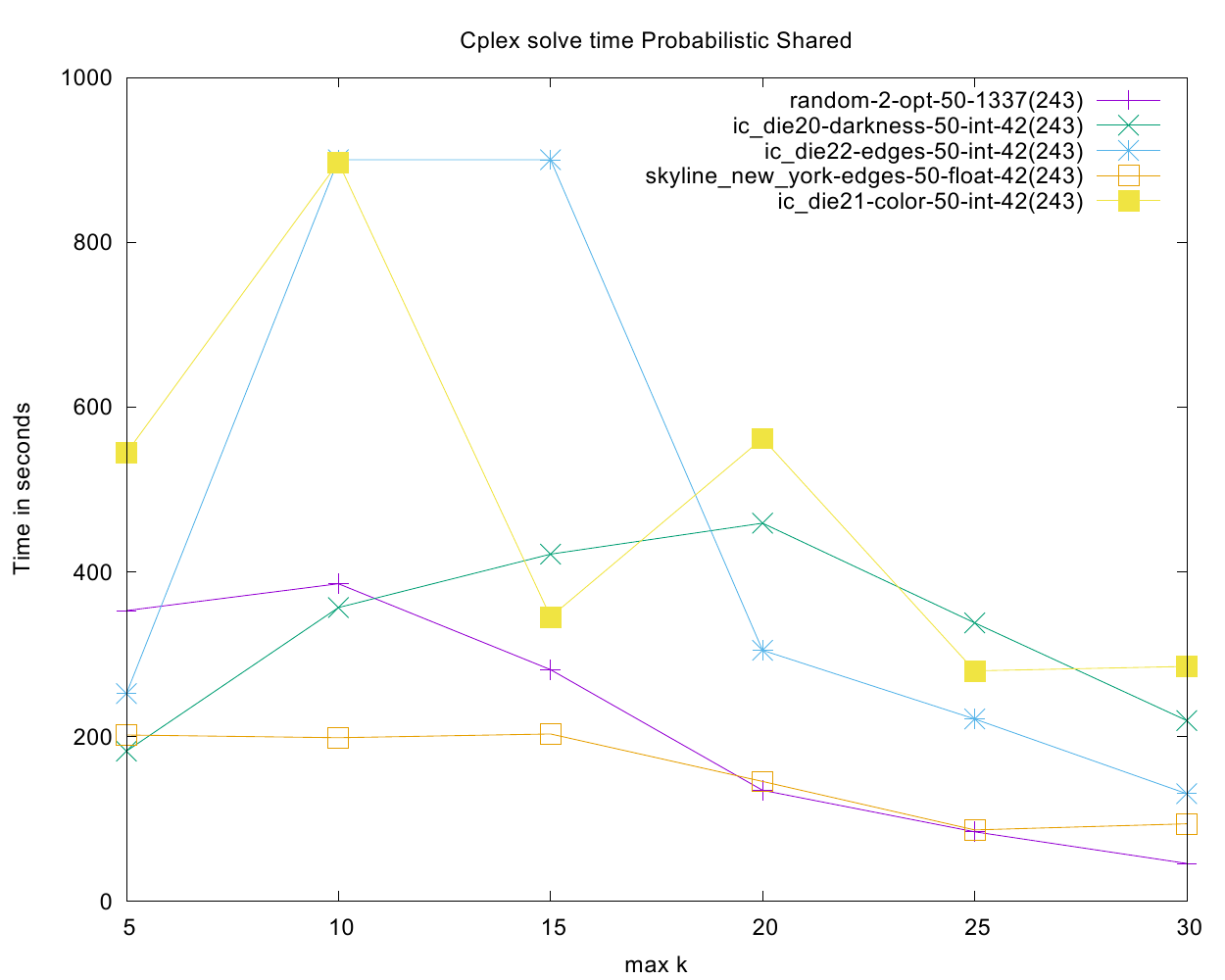}
	\subcaption{Graphs with 50 seed points}
\label{fig:ProbSharedS50_L}
\end{subfigure}
\begin{subfigure}[t]{0.49\textwidth}
	\centering
	\includegraphics[width=\textwidth]{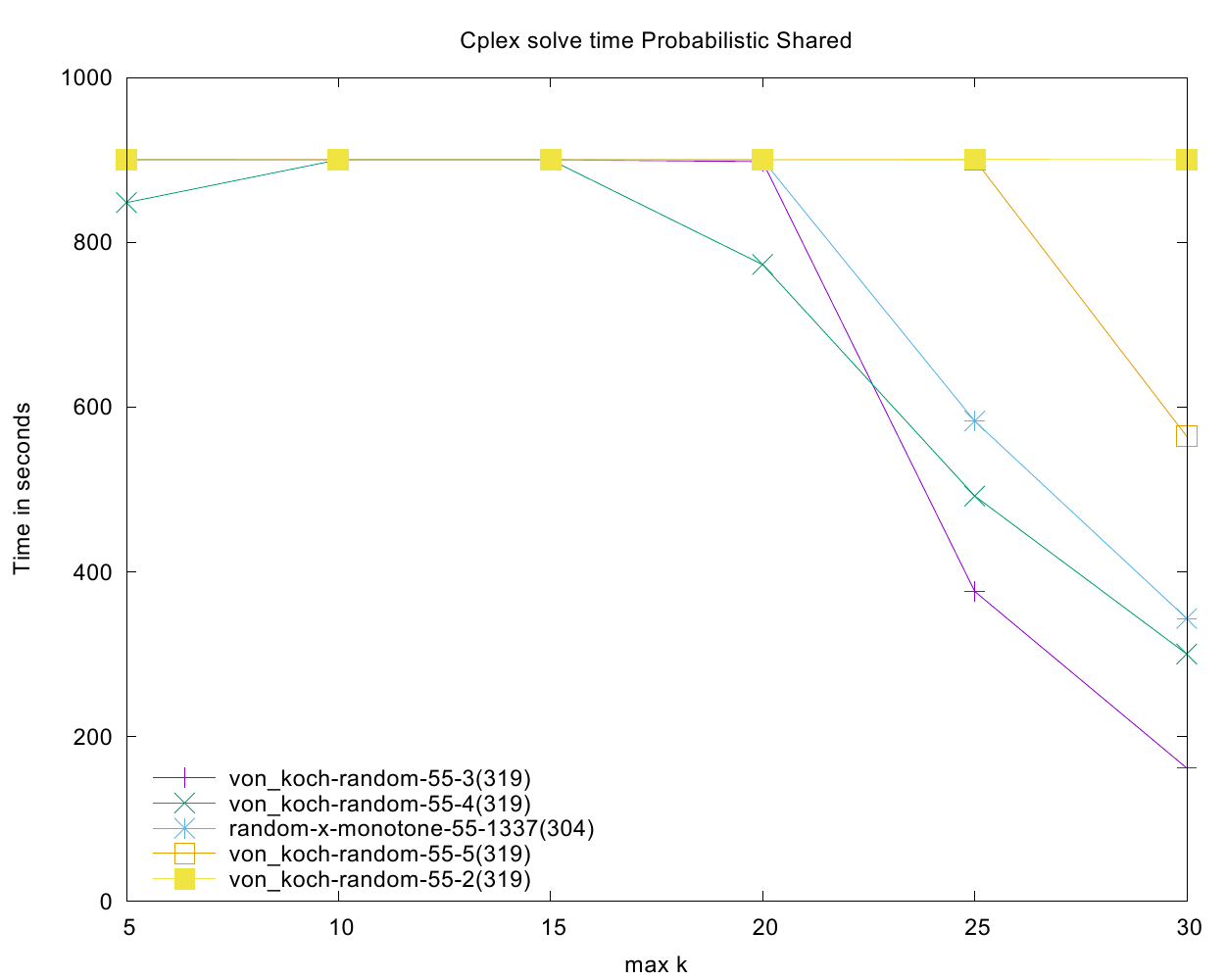}
	\subcaption{Graphs with 55 seed points}
\label{fig:ProbSharedS55_L}
\end{subfigure}
\caption[Probabilistic generator instances solving times (large)]{Solution times for generated
  instances, with a time limit of 900 seconds for every instance}
\label{fig:ProbShared_L}
\end{figure}

\begin{figure}
\centering
\begin{subfigure}[t]{0.49\textwidth}
\includegraphics[width=\textwidth]{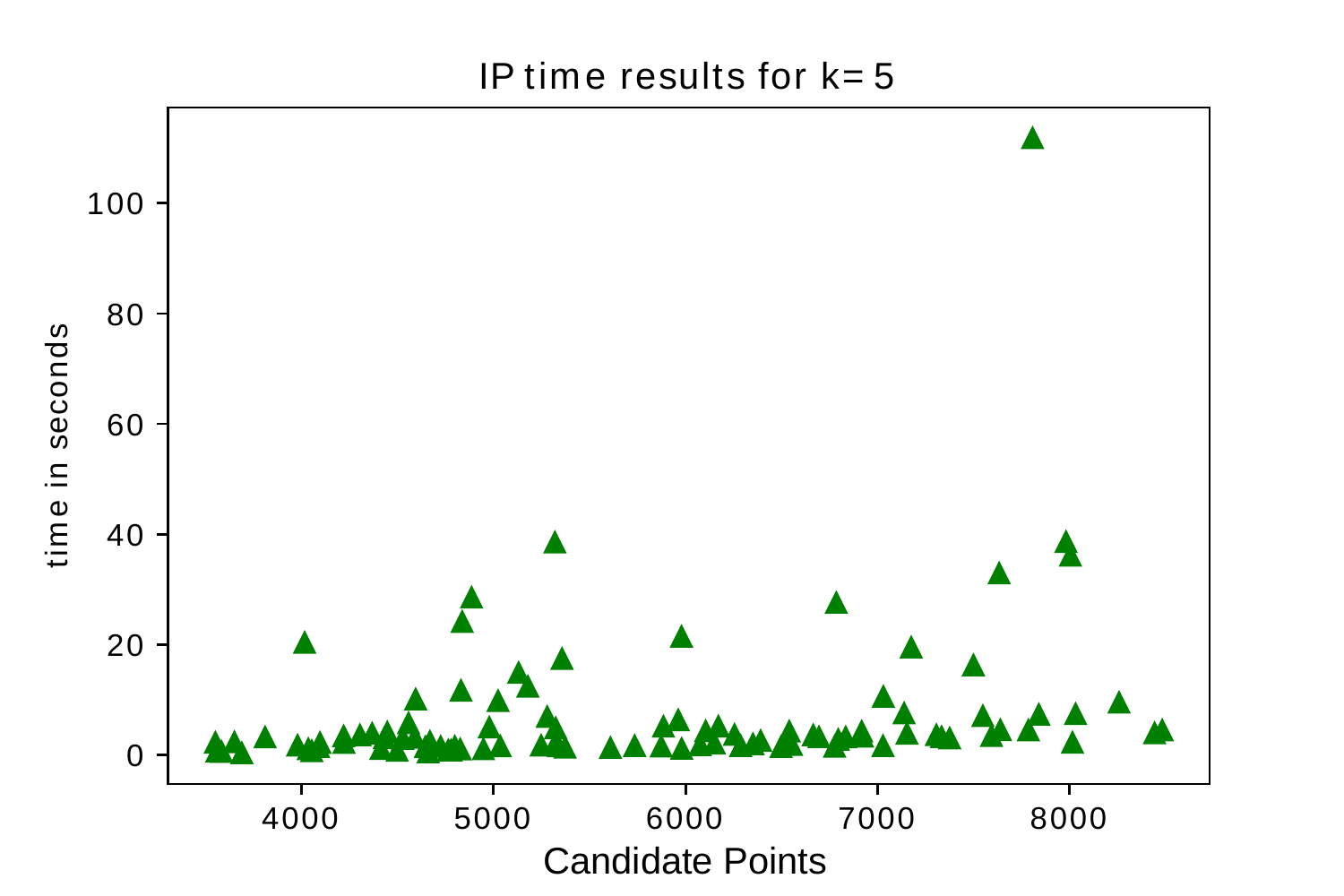}
\subcaption{IP runtime for $k=5$}\label{fig:TANOC_time5}
\end{subfigure}
\begin{subfigure}[t]{0.49\textwidth}
\includegraphics[width=\textwidth]{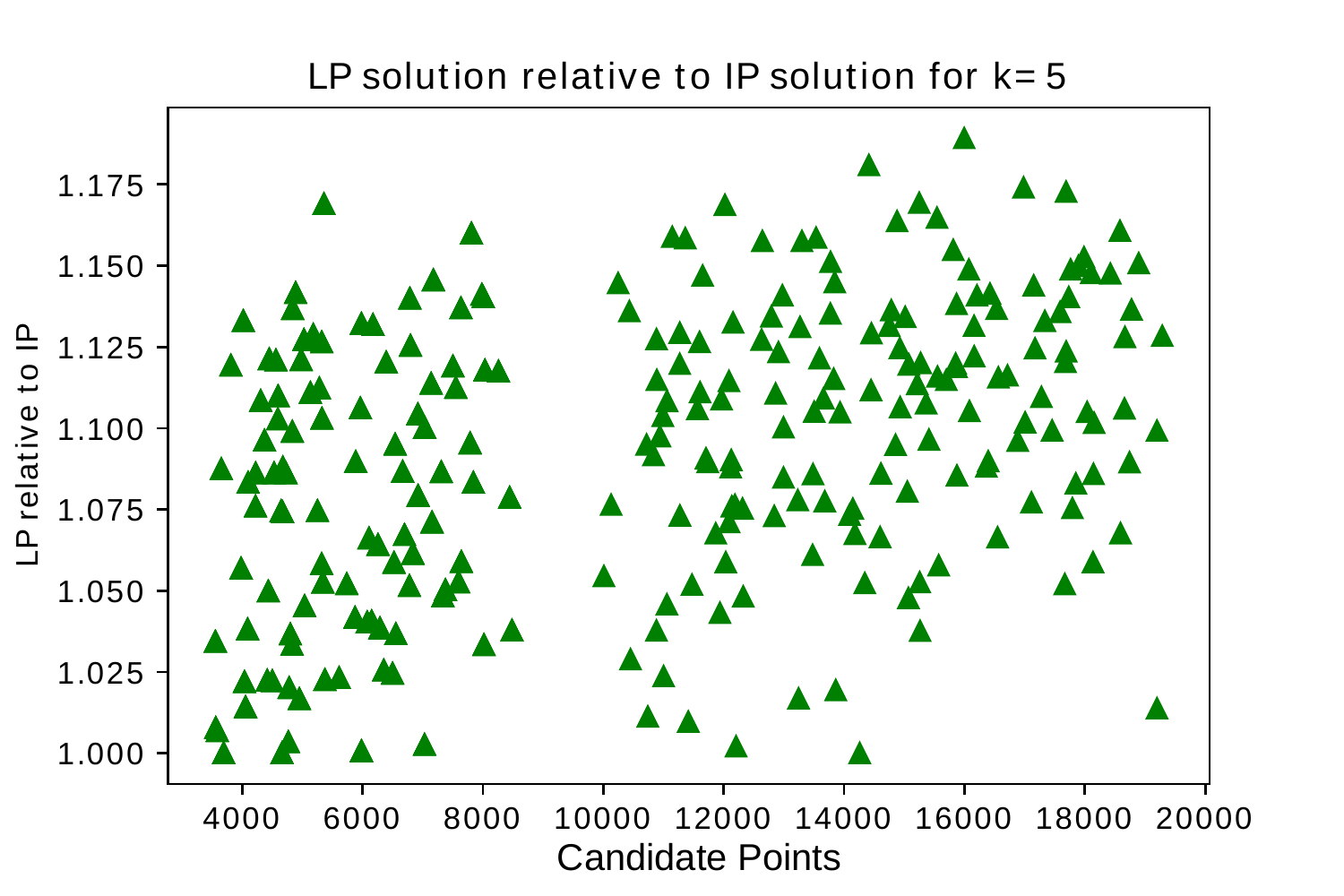}
\subcaption{Integrality gap for $k=5$}\label{fig:TANOC_gap5}
\end{subfigure}
\caption[Comparison solved algorithms]{IP runtime and integrality gap for axis-parallel instances and $k=5$.}
\label{fig:TANOC5_large}
\end{figure}

\begin{figure}
\centering
\begin{subfigure}[t]{0.49\textwidth}
\includegraphics[width=\textwidth]{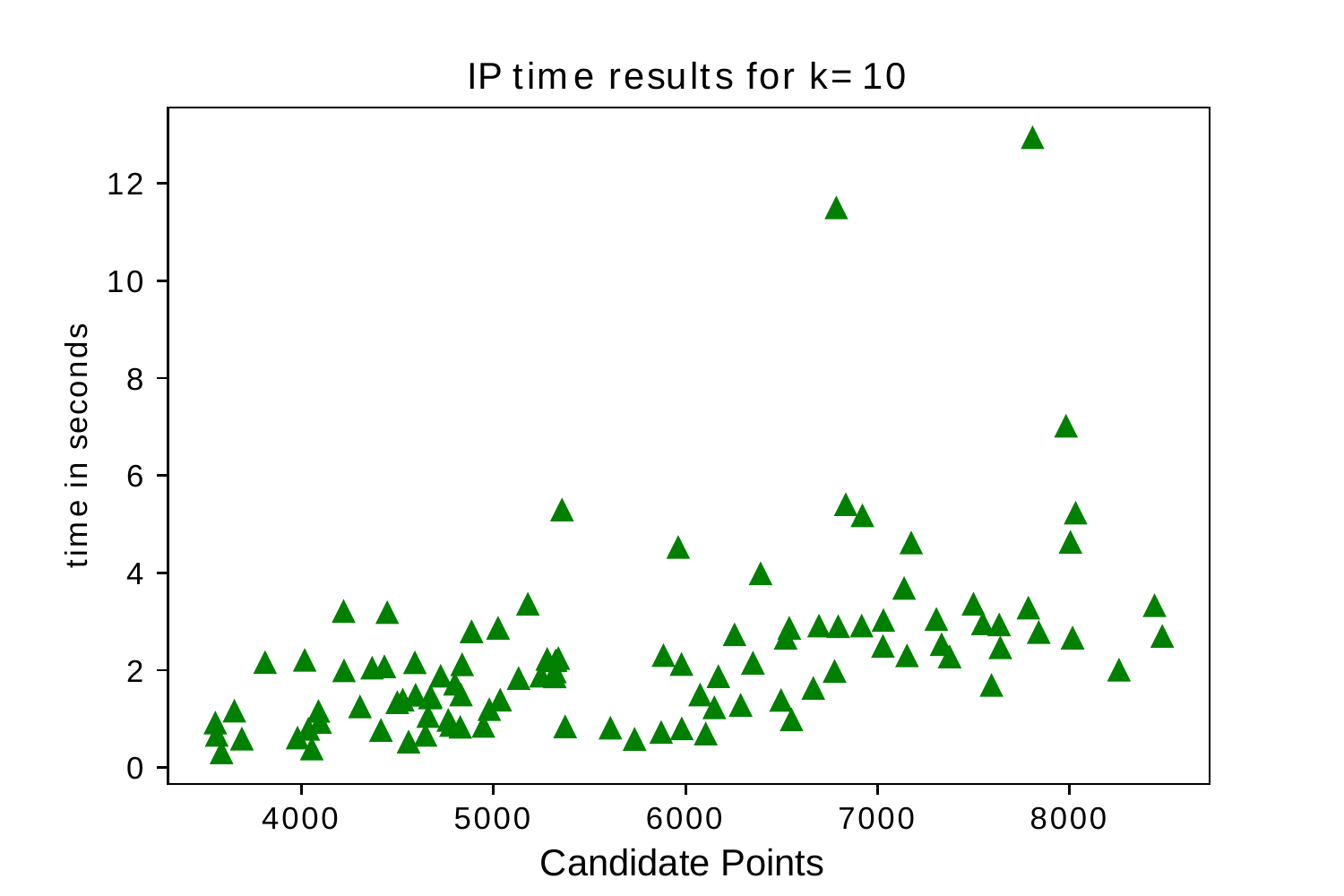}
\subcaption{IP runtime for $k=10$}\label{fig:TANOC_time10}
\end{subfigure}
\begin{subfigure}[t]{0.49\textwidth}
\includegraphics[width=\textwidth]{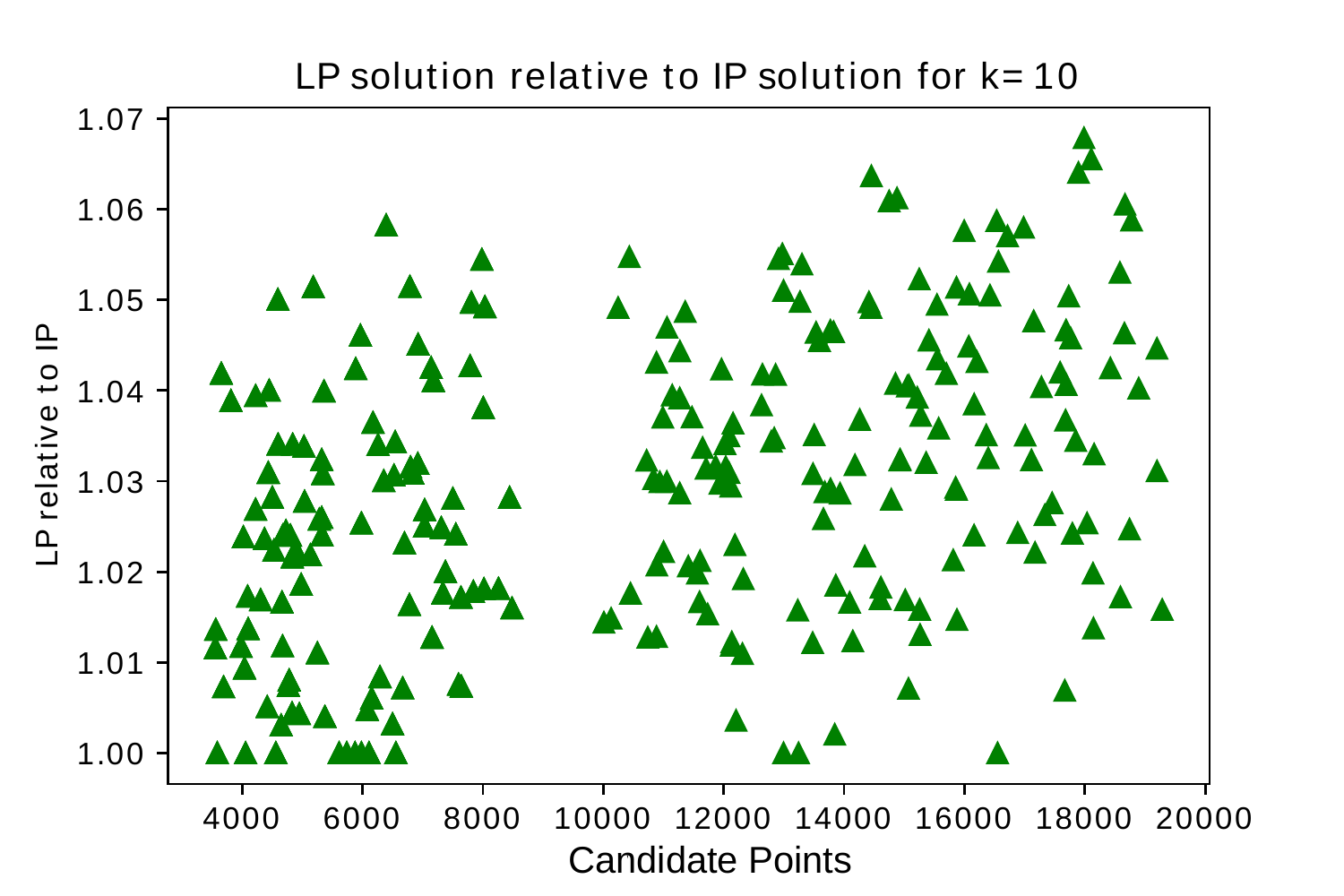}
\subcaption{Integrality gap for $k=10$}\label{fig:TANOC_gap10}
\end{subfigure}
\caption[Comparison solved algorithms]{IP runtime and integrality gap for axis-parallel instances and $k=10$.}
\label{fig:TANOC10_large}
\end{figure}

\begin{figure}
\centering
\begin{subfigure}[t]{0.49\textwidth}
\includegraphics[width=\textwidth]{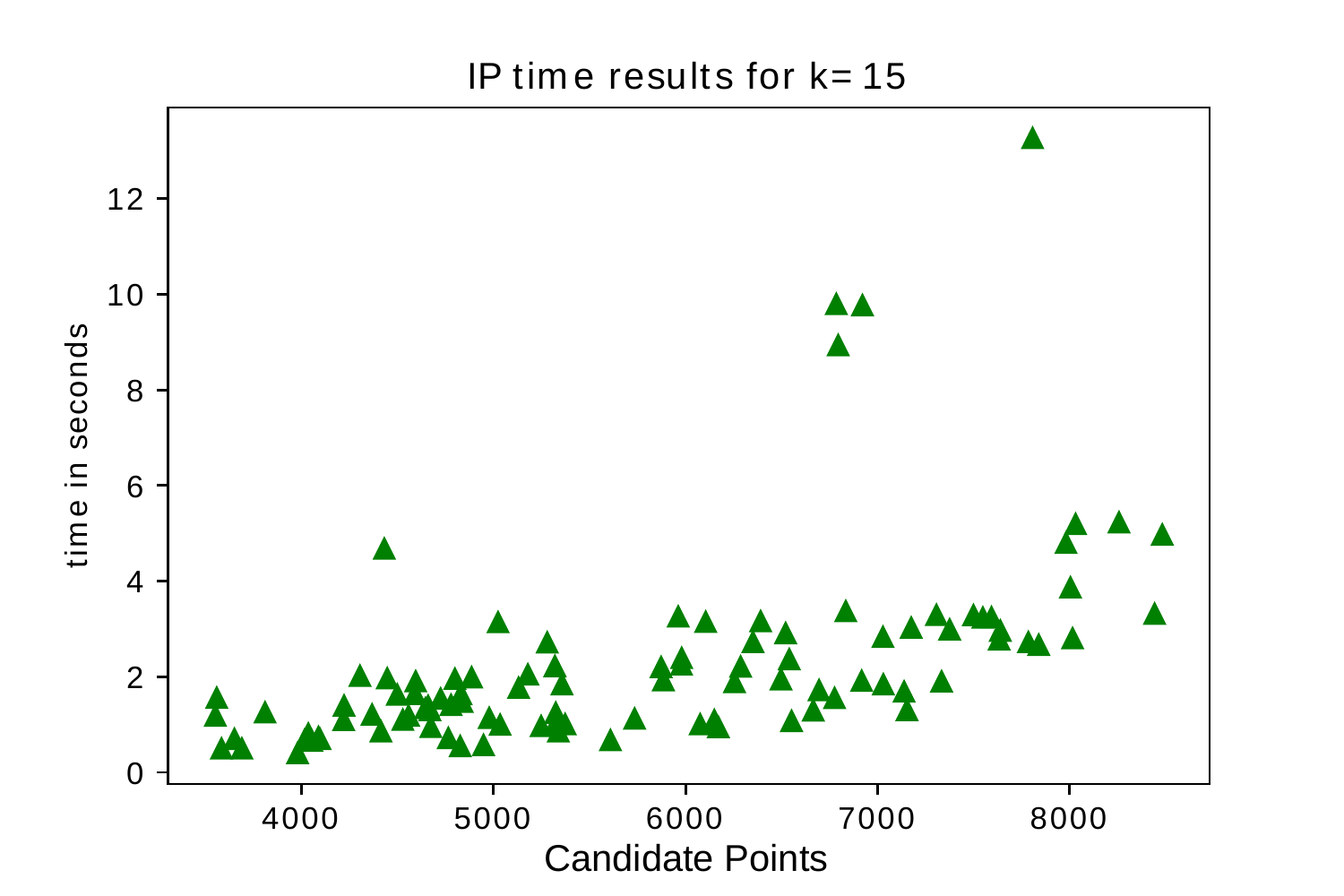}
\subcaption{IP runtime for $k=15$}\label{fig:TANOC_time15}
\end{subfigure}
\begin{subfigure}[t]{0.49\textwidth}
\includegraphics[width=\textwidth]{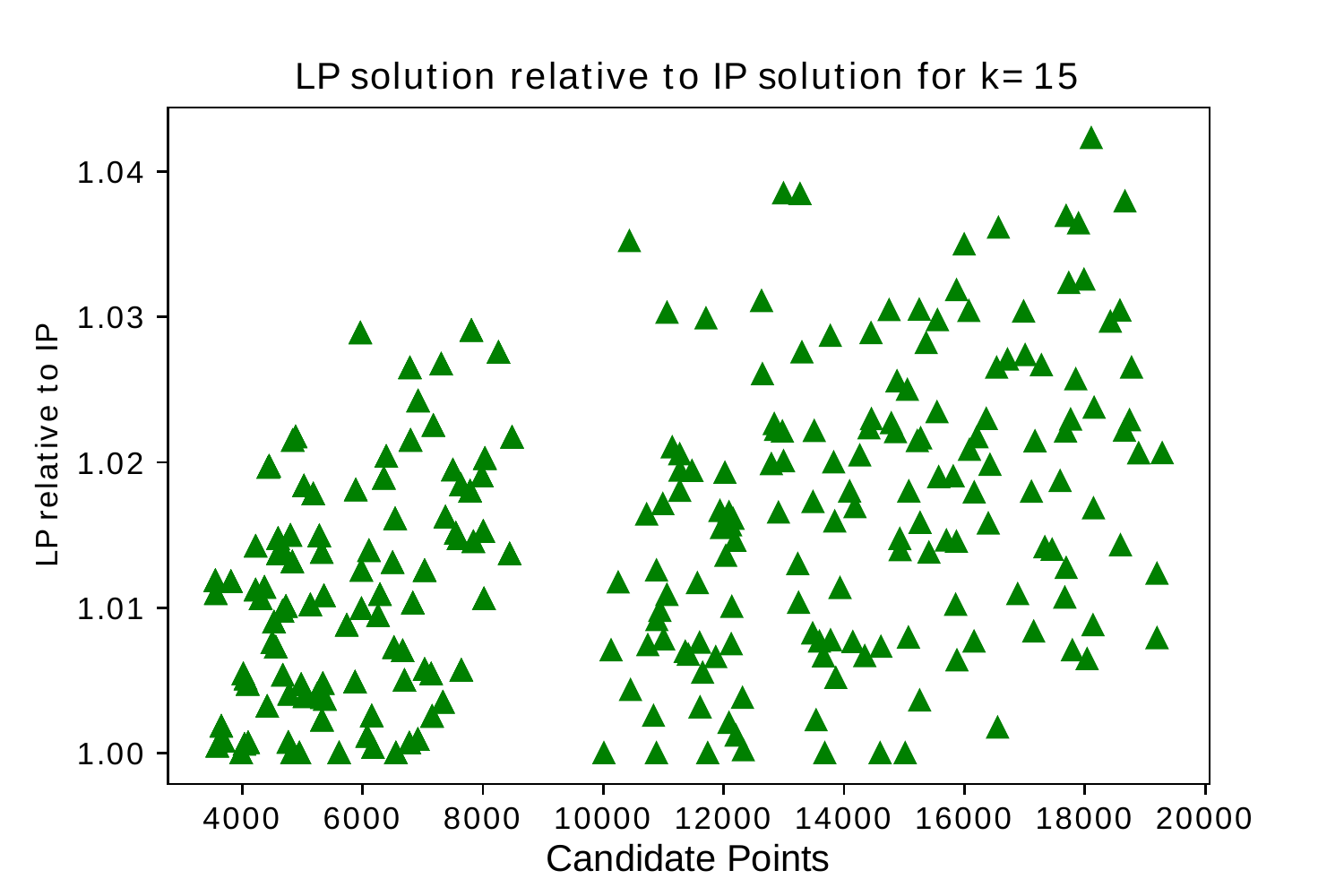}
\subcaption{Integrality gap for $k=15$}\label{fig:TANOC_gap15}
\end{subfigure}
\caption[Comparison solved algorithms]{IP runtime and integrality gap for axis-parallel instances and $k=15$.}
\label{fig:TANOC15_large}
\end{figure}

\begin{figure}
\centering
\begin{subfigure}[t]{0.49\textwidth}
\includegraphics[width=\textwidth]{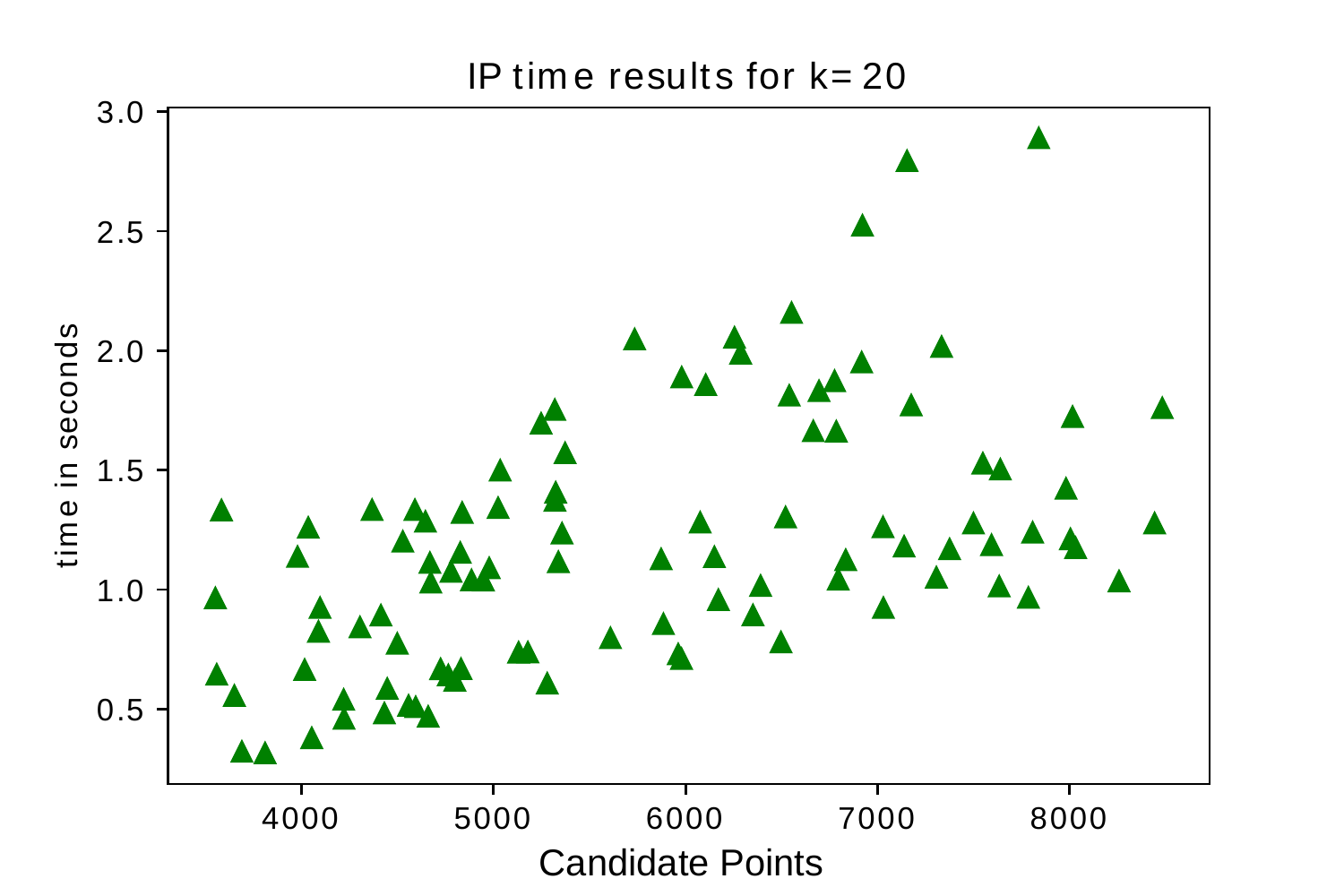}
\subcaption{IP runtime for $k=20$}\label{fig:TANOC_time20}
\end{subfigure}
\begin{subfigure}[t]{0.49\textwidth}
\includegraphics[width=\textwidth]{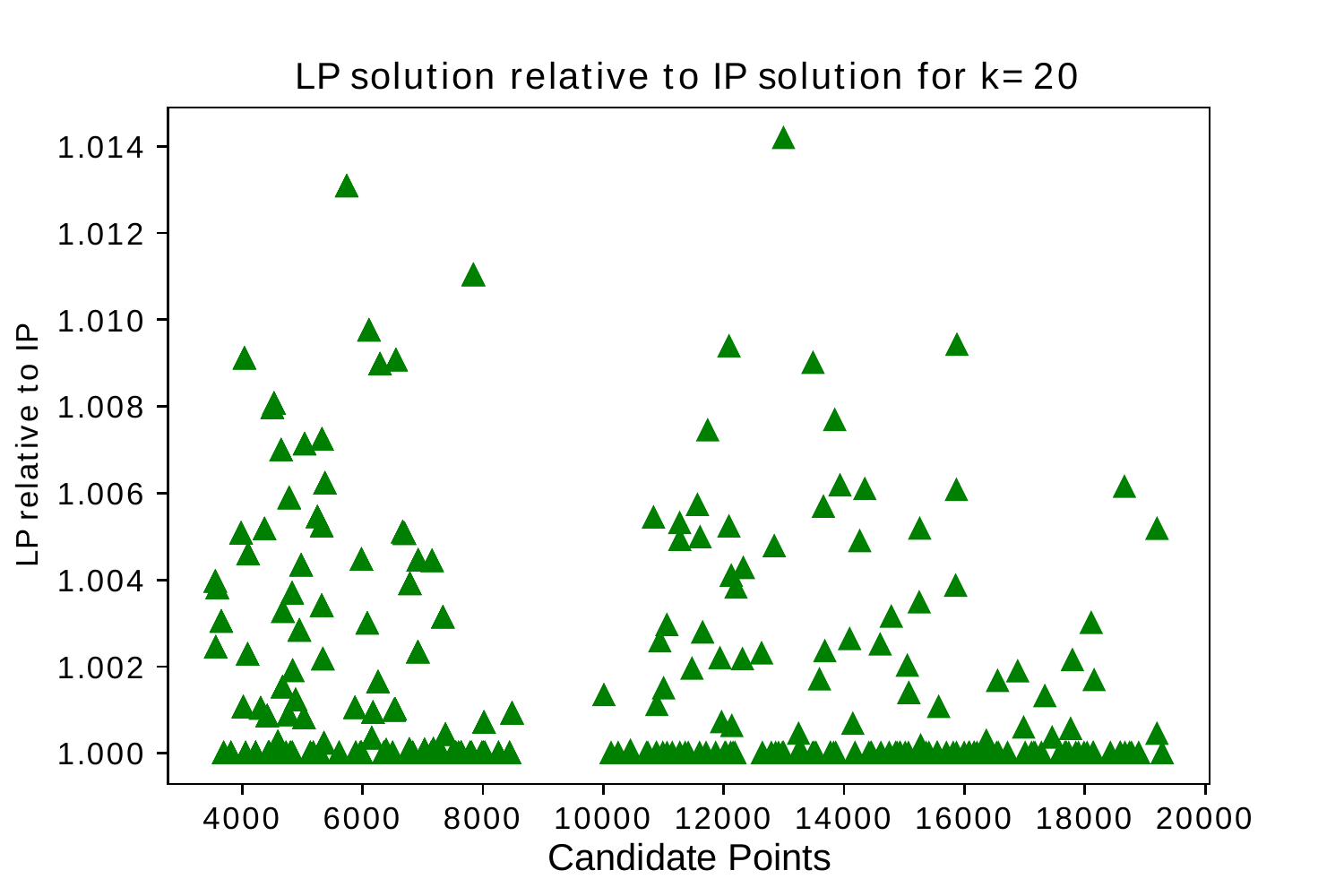}
\subcaption{Integrality gap for $k=20$}\label{fig:TANOC_gap20}
\end{subfigure}
\caption[Comparison solved algorithms]{IP runtime and integrality gap for axis-parallel instances and $k=20$.}
\label{fig:TANOC20_large}
\end{figure}

\begin{figure}
\centering
\begin{subfigure}[t]{0.49\textwidth}
\includegraphics[width=\textwidth]{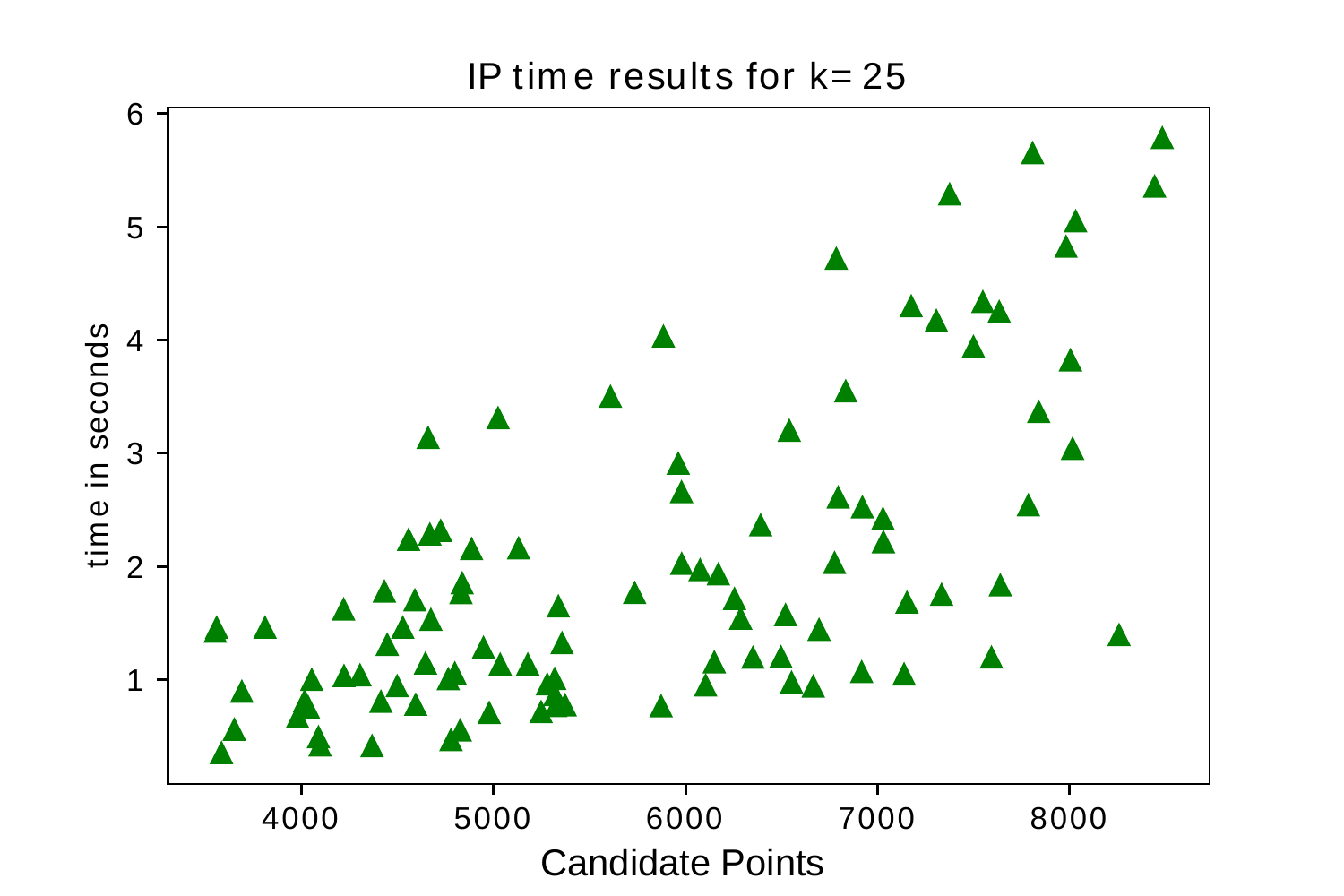}
\subcaption{IP runtime for $k=25$}\label{fig:TANOC_time25}
\end{subfigure}
\begin{subfigure}[t]{0.49\textwidth}
\includegraphics[width=\textwidth]{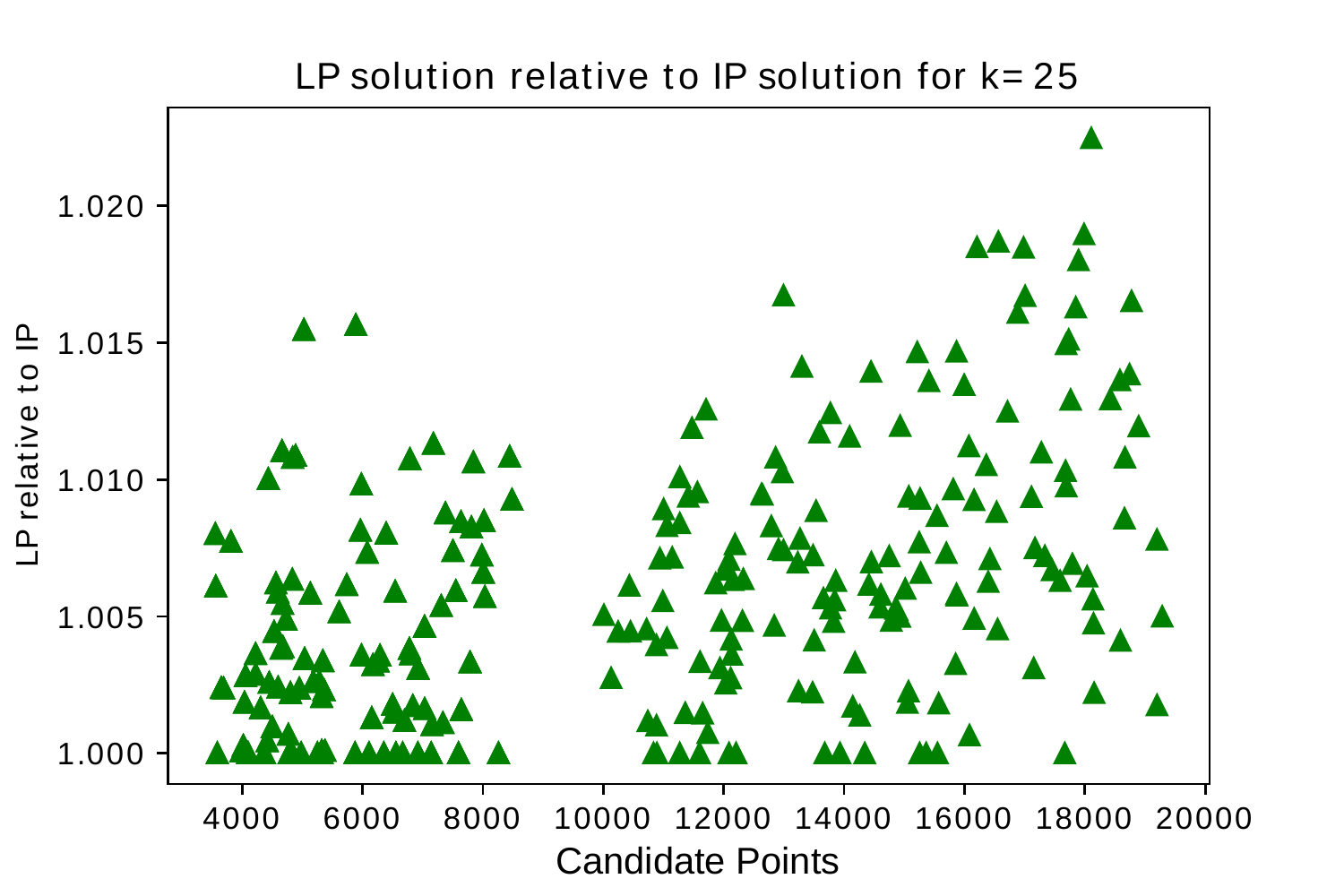}
\subcaption{Integrality gap for $k=25$}\label{fig:TANOC_gap25}
\end{subfigure}
\caption[Comparison solved algorithms]{IP runtime and integrality gap for axis-parallel instances and $k=25$. \\
\vspace*{6cm}
\ 
}
\label{fig:TANOC25_large}
\end{figure}

\end{document}